\newcommand\hide[1]{} 
\newtheorem{theorem}{Theorem}[section]
\newtheorem{lemma}[theorem]{Lemma}
\newcommand{\DEF}[1]{{\em #1\/}}
\newcommand{\NN}{\ensuremath{\mathbb N}}  
\begin{document}

\title{The Clique Problem in Ray Intersection Graphs\thanks{Research was supported in part 
		by the Slovenian Research Agency, program P1-0297 
		and projects J1-4106, L7-4119, BI-BE/11-12-F-002, and
		within the EUROCORES Programme EUROGIGA (projects GReGAS and ComPoSe) of the European Science Foundation. 
		Large part of the work was done while Sergio was visiting ULB.}}

\author{Sergio Cabello\thanks{Department of Mathematics, IMFM, and 
				Department of Mathematics, FMF, University of Ljubljana, Slovenia.	
				email: {\tt sergio.cabello@fmf.uni-lj.si}}
\and
        Jean Cardinal\thanks{Universit\'e Libre de Bruxelles (ULB), Brussels, Belgium. Email: {\tt jcardin@ulb.ac.be}}
\and
        Stefan Langerman\thanks{Ma\^itre de Recherches, Fonds de la Recherche Scientifique (F.R.S-FNRS), Universit\'e Libre de Bruxelles (ULB), Brussels, Belgium. Email: {\tt stefan.langerman@ulb.ac.be}}
}

\date{\today}

\maketitle

\begin{abstract}
Ray intersection graphs are intersection graphs of rays, or halflines, in the plane.
We show that any planar graph has an even subdivision whose complement is a ray intersection
graph. The construction can be done in polynomial time and implies that finding
a maximum clique in a segment intersection graph is NP-hard. This solves a 21-year old open problem posed by Kratochv\'il and Ne\v{s}et\v{r}il.
\end{abstract}

\thispagestyle{empty}

\newpage

\setcounter{page}{1}
\section{Introduction}

The intersection graph of a collection of sets has one vertex for each set, and
an edge between two vertices whenever the corresponding sets intersect.
Of particular interest are families of intersection graphs corresponding to geometric sets in the plane.
In this contribution, we will focus on \DEF{segment intersection graphs}, intersection graphs of line segments in the plane.

In a seminal paper, Kratochv\'{\i}l and Ne\v{s}et\v{r}il~\cite{KN90} proposed to study the complexity of two classical
combinatorial optimization problems, the maximum independent set and the maximum clique, in geometric intersection graphs.
While those problems are known to be hard to approximate in general graphs (see for instance~\cite{EH00,H96}), their restriction
to geometric intersection graphs may be more tractable.
They proved that the maximum independent set problem remains NP-hard for segment intersection graphs, even if those
segments have only two distinct directions. It was also shown that in that case, the maximum clique problem
can be solved in polynomial time. The complexity of the maximum clique problem in general segment intersection 
graphs was left as an open problem, and remained so until now. In their survey paper 
``On six problems posed by {J}arik {N}e\v{s}et\v{r}il"~\cite{BRSSTW06}, Bang-Jensen {\em et al.}
describe this problem as being ``among the most tantalizing unsolved problem in the area".

Some progress has been made in the meanwhile. In 1992, Middendorf and Pfeiffer~\cite{MP92} showed, with a simple proof, 
that the maximum clique problem was NP-hard for intersection graphs of 1-intersecting curve segments that are either
line segments or curves made of two orthogonal line segments. They also give a polynomial time dynamic programming algorithm 
for the special case of line segments with endpoints of the form $(x,0),(y, i)$, with $i\in \{1,\dots k\}$ for some fixed $k$.
Another step was made by Amb\"uhl and Wagner~\cite{AW05} in 2005, who showed that the maximum clique problem was NP-hard
for intersection graphs of ellipses of fixed, arbitrary, aspect ratio. Unfortunately, this ratio must be bounded, which excludes
the case of segments.

\paragraph*{Our results.}

We prove that the maximum clique problem in segment intersection graphs is NP-hard. In fact, we prove the stronger
result that the problem is NP-hard even in \DEF{ray intersection graphs}, defined as intersection graphs of 
rays, or halflines, in the plane. This complexity result is a consequence of the following structural lemma: every planar graph has an 
even subdivision whose complement is a ray intersection graph. Furthermore, the corresponding set of rays has a natural polynomial 
size representation. Hence solving the maximum clique problem in this graph allows to recover the maximum independent set 
in the original planar graph, a task well known to be NP-hard~\cite{GJ77}. The construction is detailed in Section~\ref{sec:construction}.

\paragraph*{Related work.}
We prove that the complement of some subdivision of any planar graph can be represented as a segment intersection graph. 
Whether the complement of every planar graph is a segment intersection graph remains an open question. In 1998, Kratochv\'{\i}l and 
Kub\v{e}na~\cite{KK98} showed that the complement of any planar graph is the intersection graph of a set of convex polygons. More recently, 
Francis, Kratochv\'il, and Vysko\v{c}il~\cite{FKV10} proved that the complement of any partial 2-tree is a segment intersection graph.
Partial 2-trees are planar, and in particular every outerplanar graph is a partial 2-tree. The representability of planar graphs by 
segment intersection graphs, formerly known as Scheinerman's conjecture, was proved recently by Chalopin and Gon\c{c}alves~\cite{CG09}.

The maximum independent set problem in intersection graphs has been studied by Agarwal and Mustafa~\cite{AM04}. In particular, they proved that
it could be approximated within a factor $n^{1/2+o(1)}$ in polynomial time for segment intersection graphs. This has been recently improved 
by Fox and Pach~\cite{FP11}, who described, for any $\epsilon >0$, a $n^{\epsilon}$-approximation algorithm. In fact, their technique also applies to the maximum clique problem, and therefore $n^{\epsilon}$ is the best known approximation factor for this problem too.

In 1994, Kratochv\'{\i}l and Matou\v{s}ek~\cite{KM94} proved that the recognition problem for segment intersection 
graphs was in PSPACE, and was also NP-hard. It is still not clear whether it is NP-complete.

\paragraph*{Notation.}
For any natural number $m$ we use $[m]=\{1,\dots m\}$.
In a graph $G$, a \DEF{rotation system} is a list $\pi=( \pi_v )_{v\in V(G)}$,
where each $\pi_v$ fixes the clockwise order of the edges of $E(G)$ incident to $v$.
When $G$ is an embedded planar graph, the embedding uniquely defines a rotation system, 
which is often called a combinatorial embedding.
For the rest of the paper we use \DEF{ray} to refer to an \emph{open ray}, that is, 
a ray does not contain its origin.
Therefore, whenever two rays intersect they do so in the relative interior of both.
Since our construction does not use degeneracies, we are not imposing any restriction
by considering only open rays.
A subdivision of a graph $G$ is said to be \emph{even} if each edge of $G$ is subdivided an even
number of times.

\section{Construction}
\label{sec:construction}

Let us start providing an overview of the approach.
We first construct a set of curves that will form the \emph{reference frame}. 
This construction is quite generic and depends only on a parameter $k\in \NN$.
We then show that the complement of any tree has a special type of representation, 
called \emph{snooker representation}, which is constructed iteratively over the levels of the tree. 
The number of levels of the tree is closely related to $k$, 
the parameter used for the reference frame.
We then argue that if $G$ is a planar graph that consists of a tree $T$ and a few, special 
paths of length two and three, then the complement of $G$ can be
represented as an intersection graph of rays by extending a snooker representation of $T$.
Finally, we argue that any planar graph has an even subdivision that can be decomposed into a tree
and a set of paths of length two and three with the required properties. 

We first describe the construction using real coordinates. 
The construction does not rely on degeneracies, and thus we can slightly perturb the coordinates
used in the description.
This perturbation is enough to argue that a representation can be computed in polynomial time.
Then, using a relation between the independence number of a graph $G$ and an even subdivision of $G$,
we obtain that computing a maximum clique in a ray intersection graph is NP-hard. 

\subsection{Reference frame}

\begin{figure}
	\centering
	\includegraphics[page=2,width=.8\textwidth]{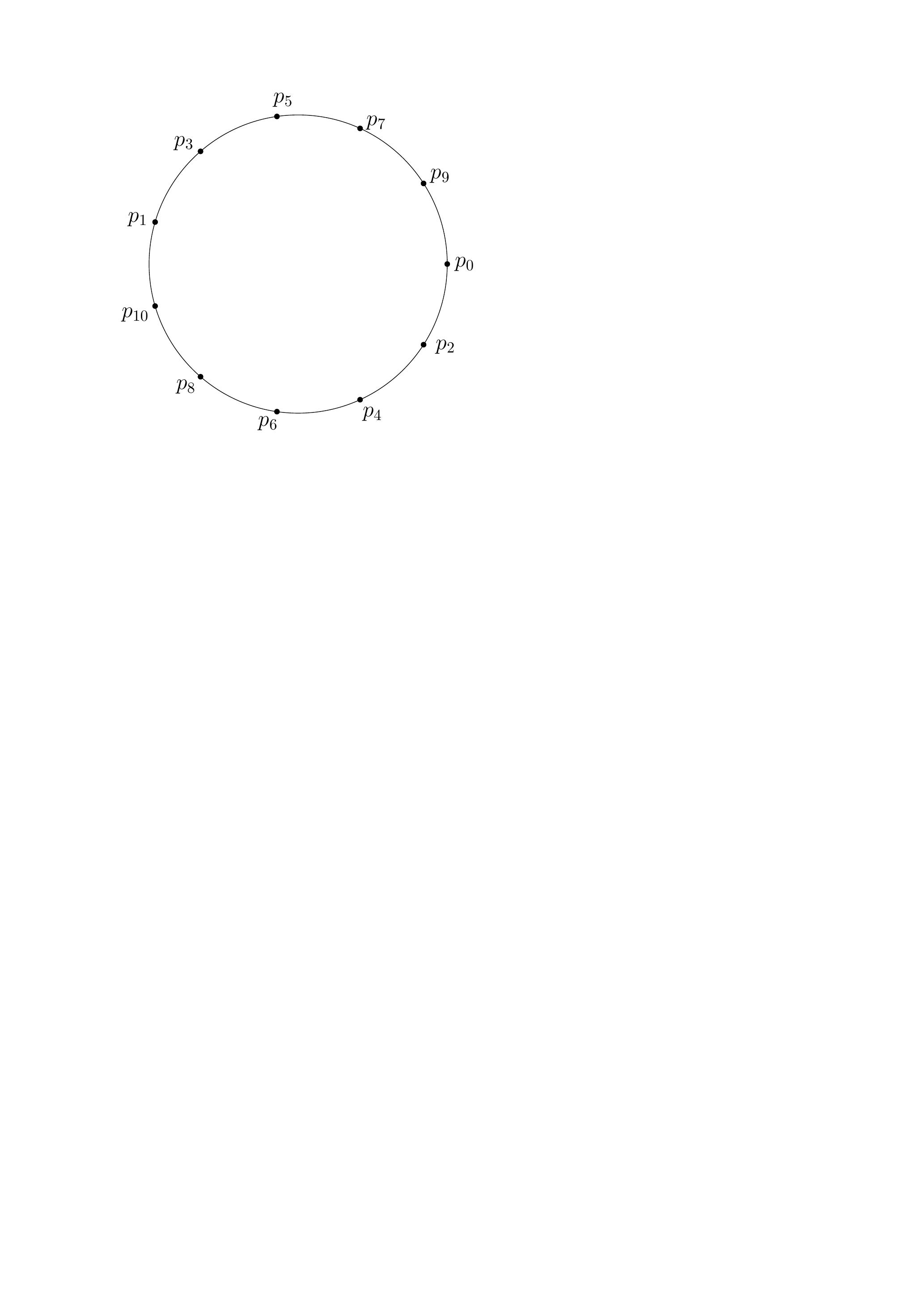}
	\caption{The points $p_0,\dots, p_k$ and the rectangles $R_1,\dots,R_{k-2}$.}
	\label{fi:points}
\end{figure}

Let $k$ be an \emph{odd} number to be chosen later. 
We set $\theta=\tfrac{k-1}{k}\pi$, and define for $i=0,\dots,k-1$ the points 
\[
	p_i= ( \cos (i\cdot \theta),\, \sin (i\cdot\theta)).
\]
The points $p_i$ lie on a unit circle centered at the origin; see Figure~\ref{fi:points}.
For each $i\in [k-2]$ we construct a rectangle $R_i$ as follows.
Let $q_i$ be the point $p_i+(p_i-p_{i+1})$, symmetric of $p_{i+1}$ with respect to $p_i$,
let $m_i$ be the midpoint between $p_i$ and $q_i$,
and let $t_i$ be, among the two points along the line through $q_{i+1}$ and $m_i$
with the property $|m_i t_i|=|m_i p_i|$, the one that is furthest from $q_{i+1}$.
We define $R_i$ to be the rectangle with vertices $p_i$, $t_i$, and $q_i$. The fourth
vertex of $R_i$ is implicit and is denoted by $r_i$. 
Any two rectangles $R_i$ and $R_j$ are congruent with respect to a rotation around the origin. 
We have constructed the rectangles $R_i$ in such way that, for any $i\in [k-2]$,
the line supporting the diagonal $p_i q_i$ of $R_i$ contains $p_{i+1}$
and the line supporting the diagonal $r_i t_i$ of $R_i$ contains $q_{i+1}$.

For each $i\in [k-2]$, let $\alpha_i$ be the arc of circle that is tangent
to both diagonals of $R_i$ and has endpoints $t_i$ and $r_i$ (see Figure~\ref{fi:alphas}). 
Note that the curves $\alpha_i$ and the rectangles $R_i$ have been chosen so that
any line that intersects $\alpha_i$ twice or is tangent to the curve $\alpha_i$ 
must intersect the curve $\alpha_{i+1}$. 
For any $i\in [k-2]$, 
let $\Gamma_i$ be the set of rays that intersect $\alpha_{i}$ twice or are tangent to $\alpha_{i}$
and have its origin on $\alpha_{i+1}$.
We also define $\Gamma_0$ as the set of rays with origin on $\alpha_1$ and passing through $p_0$. 
The rays of $\Gamma_i$ that are tangent to $\alpha_i$ will play a special role. 
In fact, we will only use rays of $\Gamma_i$ that are ``near-tangent" to $\alpha_i$.

\begin{lemma}
\label{lem:refint1}
When $|j-i|>1$, any ray from $\Gamma_i$ intersects any ray from $\Gamma_j$.
\end{lemma}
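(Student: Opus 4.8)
The plan is to fix rays $\rho_i\in\Gamma_i$ and $\rho_j\in\Gamma_j$ with $i<j$ (without loss of generality), to write $d:=j-i\in\{2,\dots,k-2\}$, and to prove $\rho_i\cap\rho_j\neq\emptyset$; the case $i=0$ is handled identically once $p_0$ is regarded as a degenerate arc ``$\alpha_0$'', so that ``near-tangent to $\alpha_0$'' just means ``passing through $p_0$''. Write $o_m\in\alpha_{m+1}$ for the endpoint of $\rho_m$ and $\ell_m$ for its supporting line; recall we only consider $\rho_m$ near-tangent to $\alpha_m$. The first step is to put a single $\rho_m$ into normal form: using the defining collinearities of $R_m$ (the line of the diagonal $p_mq_m$ contains $p_{m+1}$, and the line of $r_mt_m$ contains $q_{m+1}$) together with the fact that the $R_m$ are pairwise congruent under rotation by $\theta$ about the origin, I would show that (a) both diagonals of $R_m$ --- and hence $\ell_m$, which near-tangency to $\alpha_m$ confines to the thin cone they span --- pass within distance $O(1/k)$ of the origin, and (b) $o_m$ lies at distance $2+O(1/k)$ from the origin, on the side of it opposite to the direction in which $\rho_m$ escapes to infinity, so that $\rho_m$ contains \emph{every} point of $\ell_m$ lying within distance $2-O(1/k)$ of the origin. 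Both (a) and (b) reduce, after rotating $R_m$ into a canonical position, to elementary estimates on the points $p_i=(\cos i\theta,\sin i\theta)$ and the derived points $q_i,t_i,r_i$.

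Granting (a)--(b), it suffices to show that $\ell_i$ and $\ell_j$ cross within distance $2-O(1/k)$ of the origin. Up to an additive $O(1/k)$ the undirected angle between $\ell_i$ and $\ell_j$ equals the angle between the radial directions to $p_i$ and to $p_j$, namely $d\theta$ modulo $\pi$; since $\theta=\tfrac{k-1}{k}\pi$ we have $d\theta\equiv-\tfrac{d\pi}{k}\pmod\pi$, so this angle equals $\tfrac{\pi}{k}\min(d,\,k-d)+O(1/k)$, which for $d\in\{2,\dots,k-2\}$ is $\Theta(1/k)$ and at least roughly $\tfrac{2\pi}{k}$. This is exactly where the hypothesis $|i-j|>1$ is used: for $d=1$ the lines $\ell_i$ and $\ell_j$ become parallel and the argument collapses --- as it must, since pairs of consecutive $\Gamma$'s have to produce non-adjacencies in the complement.

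Finally, two lines each within distance $\delta$ of the origin and crossing at angle $\beta$ meet within distance $O(\delta/\beta)$ of the origin; with $\delta=O(1/k)$ and $\beta\gtrsim 2\pi/k$ this is a constant, and with the explicit constants it is smaller than $2$. By (b) any point within distance $2-O(1/k)$ of the origin that lies on $\ell_i\cap\ell_j$ lies on both $\rho_i$ and $\rho_j$, so the two rays intersect. I expect the main obstacle to be precisely this constant-chasing: one has to check that the several $O(1/k)$ quantities --- in (a), in the angle correction, in the ``near-tangent'' slack, and in the slack coming from the alternative ``intersects $\alpha_m$ twice'' in the definition of $\Gamma_m$ --- are collectively small enough that the crossing point of $\ell_i$ and $\ell_j$ stays within the common reach $2-O(1/k)$ of the two rays, the tightest instance being $d=2$ (equivalently $d=k-2$), where $\beta$ is smallest. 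The remaining ingredients --- the canonical-position computation of step one and the modular arithmetic of step two --- are routine.
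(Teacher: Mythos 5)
Your overall strategy---put each supporting line into a normal form near the origin, bound the angle between the two lines away from zero using $|i-j|>1$, and then locate the crossing point within the ``reach'' of both rays---is reasonable, but step (b) misstates the geometry of the construction, and the error is fatal to the concluding estimate. The arc $\alpha_{m+1}$ has endpoints $r_{m+1}$ and $t_{m+1}$, and a short computation from the definitions ($q_m=2p_m-p_{m+1}$, $m_m=(p_m+q_m)/2$, $|m_mt_m|=|m_mp_m|$) shows $r_{m+1}\approx p_{m+1}$ and $t_{m+1}\approx q_{m+1}\approx 3p_{m+1}$: the arcs are long, nearly radial slivers extending from distance about $1$ to distance about $3$ from the center. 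Hence the origin $o_m\in\alpha_{m+1}$ of a ray of $\Gamma_m$ is \emph{not} at distance $2+O(1/k)$; it can be as close as $1-O(1/k)$ to the center, and the only uniform guarantee is that $\rho_m$ contains every point of $\ell_m$ within distance $1-O(1/k)$ of the center (together with the entire half of $\ell_m$ on the $\alpha_m$ side of the closest point). So the common reach you need is $1-O(1/k)$, not $2-O(1/k)$.

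With that correction your final step no longer closes. The distance from the center to $\ell_m$ is of order $\tfrac{s_1s_2}{s_1+s_2}\sin\varepsilon$ with $s_1,s_2\in[1,3]$ and $\varepsilon=\pi/k+O(1/k)$, so it can be as large as roughly $\tfrac{3}{2}\cdot\tfrac{\pi}{k}$; and your lower bound on the crossing angle is also too optimistic, because the ``$+O(1/k)$'' correction to the direction of $\ell_m$ (coming from where on the two arcs the defining points sit, a spread of up to about $\pi/k$) is of the \emph{same order} as the main term $2\pi/k$ when $d=2$, so the angle is only guaranteed to be about $\pi/k$, not $2\pi/k$. Plugging these into your own bound $O(\delta/\beta)$ puts the crossing point of $\ell_i$ and $\ell_j$ only within a constant that comfortably exceeds $1$, i.e.\ possibly outside the region where both rays are guaranteed to be present. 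A purely radial argument therefore cannot finish the proof: one must additionally track the \emph{angular} position of the crossing point and show that, outside the unit disk, $\ell_j$ is confined to narrow sectors around the directions $p_j$ and $p_{j+1}$, which for $|i-j|>1$ are disjoint from the sector around $p_{i+1}$ where the ``missing'' portion of $\rho_i$ (the part behind its origin) lives---this directional separation is the real content of the hypothesis $|i-j|>1$ and is absent from your write-up. (The paper states this lemma without proof, so there is no official argument to compare against, but as it stands your proposal has a genuine gap rather than just unfinished constant-chasing.)
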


\begin{lemma}
\label{lem:refint2}
Any ray tangent to $\alpha_{i+1}$ at the point $x\in\alpha_{i+1}$
intersects any ray from $\Gamma_i$, except those having their origin at $x$.
\end{lemma}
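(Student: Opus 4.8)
The plan is to turn the statement into a separation argument resting on two ingredients: first, each arc $\alpha_j$ is an arc of a circle, hence convex, so it is contained in a closed half-plane bounded by any of its tangent lines and meets that tangent line only at the point of tangency; second, the reference frame has been built so that, for every $x\in\alpha_{i+1}$, the whole arc $\alpha_i$ lies in the \emph{open} half-plane bounded by the tangent line to $\alpha_{i+1}$ at $x$ that does not contain $\alpha_{i+1}$. I would take this second property as a geometric fact about the construction of the $R_j$'s and $\alpha_j$'s — it can be read off the coordinates and, since nothing is degenerate, it survives the small perturbation used later — and use it as the engine of the proof. (It is exactly the complement of the funnel property already recorded: lines through $\alpha_i$ reach $\alpha_{i+1}$, and $\alpha_i$ sits strictly beyond every tangent of $\alpha_{i+1}$.)

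Concretely, let $\rho$ be a ray tangent to $\alpha_{i+1}$ at $x$, let $\ell$ be its supporting line, let $\overline{H}$ be the closed half-plane bounded by $\ell$ that contains $\alpha_{i+1}$ (noting $\alpha_{i+1}\cap\ell=\{x\}$), let $H$ be the corresponding open half-plane and $H'$ the opposite open half-plane, so $\alpha_{i+1}\setminus\{x\}\subseteq H$ while, by the property above, $\alpha_i\subseteq H'$. Now take $\gamma\in\Gamma_i$ with origin $y$, and assume $y\neq x$, so that $y\in H$. Since $\gamma$ is tangent to $\alpha_i$ or crosses $\alpha_i$ twice, $\gamma$ contains a point $w\in\alpha_i\subseteq H'$. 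Hence $\gamma$, starting at $y\in H$ and later reaching $w\in H'$, crosses the line $\ell$ at a single point $z$ strictly between $y$ and $w$ (single because $y\notin\ell$ forces $\gamma\not\subseteq\ell$); in particular $z$ is an interior point of the open ray $\gamma$, so $z\in\gamma$. It then remains only to check that $z$ lies on the ray $\rho$ itself and not on the complementary part of the line $\ell$; here I would use the specific tangent direction carried by $\rho$ in the construction and argue that the crossing point $z$ produced by any such $\gamma$ falls on that side — for instance via the monotonicity of $z$ as the origin $y$ moves along $\alpha_{i+1}\setminus\{x\}$, or directly from the explicit coordinates.

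For the excepted case $y=x$, the conclusion is immediate: then $\gamma$ emanates from $x$ but is not supported by $\ell$ (it is tangent to $\alpha_i$, not to $\alpha_{i+1}$), so the only point $\ell$ could share with $\gamma$ is the origin $x$, which does not belong to the open ray $\gamma$; hence $\gamma\cap\ell=\emptyset$ and a fortiori $\gamma\cap\rho=\emptyset$. The step I expect to be the main obstacle is the orientation claim in the previous paragraph: the convexity-plus-separation argument cleanly produces an intersection of $\gamma$ with the \emph{line} $\ell$, and the real work is pinning down that this intersection falls on the \emph{ray} $\rho$ and not on its complement; I expect this to require either the explicit coordinates of the reference frame or a careful sidedness analysis of where the lines supporting the rays of $\Gamma_i$ meet the tangent lines of $\alpha_{i+1}$.
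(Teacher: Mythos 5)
The paper states Lemma~\ref{lem:refint2} without proof, so there is nothing to compare your argument against; judged on its own, your proposal has the right skeleton (a half-plane separation by the tangent line $\ell$ at $x$, forcing any ray that starts on $\alpha_{i+1}\setminus\{x\}$ and reaches $\alpha_i$ to cross $\ell$), and your treatment of the excepted case $y=x$ is correct. But as a proof it has two genuine gaps. First, the engine of the whole argument --- that for every $x\in\alpha_{i+1}$ the entire arc $\alpha_i$ lies strictly in the open half-plane bounded by the tangent to $\alpha_{i+1}$ at $x$ opposite to $\alpha_{i+1}$ --- is simply postulated ``as a geometric fact about the construction.'' This is not a routine consequence of convexity: it is a quantitative claim about how the rectangles $R_i$ and the arcs $\alpha_i$ are positioned (the stated design property of the reference frame is that tangents/secants of $\alpha_i$ reach $\alpha_{i+1}$, which is a different and not obviously equivalent assertion), and it is essentially where all of the content of the lemma lives. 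A proof must actually derive it from the definitions of $p_i,q_i,t_i,r_i$ and $\alpha_i$.

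Second, the step you yourself flag as the ``main obstacle'' is not a loose end but a genuine failure point of the argument as written: you only obtain a point $z\in\gamma\cap\ell$, and whether $z$ lies on the ray $\rho$ depends on where the origin of $\rho$ sits on $\ell$ relative to $z$. Note that $z\neq x$ in general, so the statement cannot be true for literally \emph{every} ray tangent to $\alpha_{i+1}$ at $x$ (a tangent ray whose origin lies between $z$ and $x$ misses $z$); the lemma is only meaningful for the tangent rays that actually occur in the construction, namely those whose origins lie on $\alpha_{i+2}$, far from the crossing region near $\alpha_{i+1}$ and $\alpha_i$. Pinning down that $z$ always falls past such an origin (e.g.\ by locating $\gamma\cap\ell$ between $\alpha_{i+1}$ and $\alpha_i$, on the far side of $x$ from $\alpha_{i+2}$ along $\ell$) again requires engaging with the explicit geometry of the reference frame. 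Until both of these are supplied, the proposal is a plan rather than a proof.
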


Note that the whole construction depends only on the parameter $k$.
We will refer to it as \emph{reference frame}.

\begin{figure}
	\centering
	\includegraphics[page=3,width=.8\textwidth]{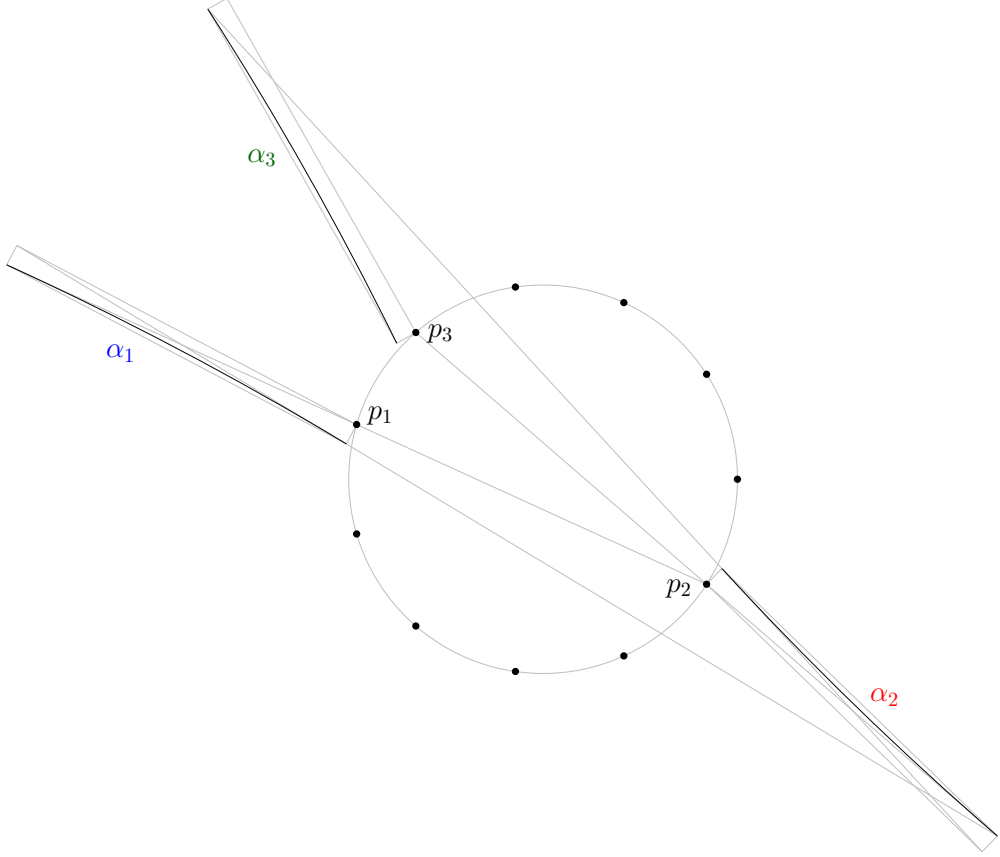}
	\caption{The circular arcs $\alpha_1,\alpha_2,\dots$.}
	\label{fi:alphas}
\end{figure}

\subsection{Complement of Trees}

Let $T$ be a graph with a rotation system $\pi_T$,
let $r$ vertex in $T$ and let $rs\in E(T)$ be an arbitrary edge incident to $r$.
The triple $(\pi_T,r,rs)$ induces a natural linear order $\tau=\tau(\pi_T,r,rs)$ 
on the vertices of $T$. This order $\tau$ corresponds to the order followed 
by a breadth-first traversal of $T$ from $r$ with the following additional restrictions: 
\begin{enumerate}
	\item[(i)] $s$ is the second vertex; 
	\item[(ii)] the children of any vertex $v$ are visited according to the clockwise order $\pi_v$;
	\item[(iii)] if $v\not= r$ has parent $v'$, the first child of $u$ of $v$ is such that
		$vu$ is the successor of $vv'$ in the clockwise order $\pi_v$.
\end{enumerate}
We say that vertices $v$ and $v'$ at the same level are \DEF{consecutive} when they are consecutive in $\tau$.
See Figure~\ref{fi:tree}. 
The linear order will be fixed through our discussion, so we will generally drop it from the notation.
Henceforth, whenever we talk about a tree $T$ and a linear order $\tau$ on $V(T)$, we assume
that $\tau$ is the natural linear order induced by a triple $(\pi_T,r,rs)$. In fact,
the triple $(\pi_T,r,rs)$ is implicit in $\tau$.
For any vertex $v$ we use $v^+$ for its successor and $v^-$ for its predecessor.

\begin{figure}
	\centering
	\includegraphics[page=4,width=.7\textwidth]{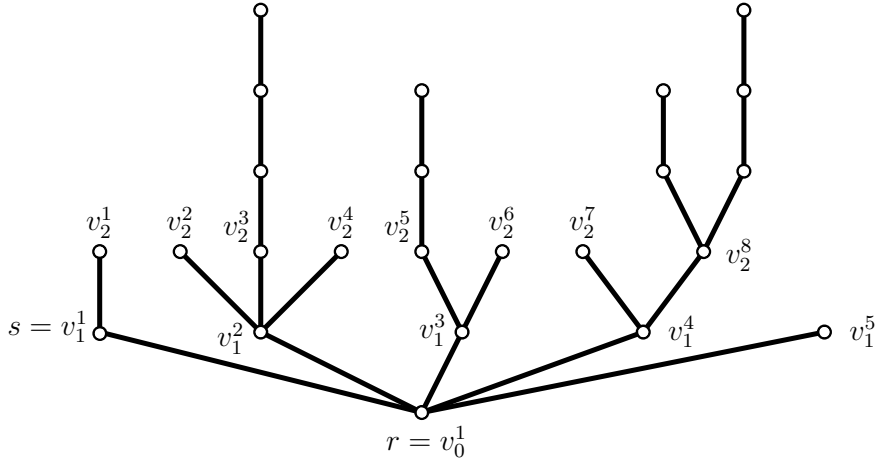}
	\caption{In this example, assuming that $\pi$ is as drawn, the linear order $\tau$ is 
		$r,s,v_1^1,\dots, v_{1}^5,v_2^1,\dots$.}
	\label{fi:tree}
\end{figure}

A \DEF{snooker representation} of the complement of an embedded tree $T$ 
with linear order $\tau$ is a representation of $T$ with rays that
satisfies the following properties:
\begin{itemize}
	\item[(a)] Each vertex $v$ at level $i$ in $T$ 
		is represented by a ray $\gamma_v$ from $\Gamma_i$. 
		Thus, the origin of $\gamma_v$, denoted by $a_v$, is on $\alpha_{i+1}$.
		Note that this imply that $k$ is larger than the depth of $T$. 
	\item[(b)] If a vertex $u$ has parent $v$, then $\gamma_u$ passes through the origin $a_v$ of $\gamma_v$.
		(Here it is relevant that we consider all rays to be open, as otherwise $\gamma_u$ and $\gamma_v$ would intersect.)
		In particular, all rays corresponding to the children of $v$ pass through the point $a_v$.
	\item[(c)] The origins of rays corresponding to consecutive vertices $u$ and $v$ of level $i$ 
		are consecutive along $\alpha_{i+1}$. 
		That is, no other ray in the representation
		has its origin on $\alpha_{i+1}$ and between the origins $\gamma_u$ and $\gamma_v$.
\end{itemize}

\begin{lemma}
\label{le:snooker}
	The complement of any embedded tree with a linear order $\tau$ has a snooker representation.
\end{lemma}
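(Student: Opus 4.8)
The plan is to build the representation level by level, by induction on the depth of $T$, maintaining at each stage that properties (a)--(c) hold for the vertices already placed and that these rays realize the complement of the subtree of $T$ truncated at the current level. For the base case the root $r$ (level $0$) receives an arbitrary ray $\gamma_r\in\Gamma_0$, with origin $a_r$ on $\alpha_1$; there is nothing to check. Assume the rays of all vertices of level $\le i$ have been placed. In the order $\tau$ the vertices of level $i+1$ occur in consecutive blocks, one block per level-$i$ vertex $v$, the block of $v$ being the children of $v$ listed in the clockwise order $\pi_v$, and the blocks ordered according to the positions of their parents in $\tau$. Going through the level-$i$ vertices in $\tau$-order, and through each block in $\pi$-order, we assign to each child $u$ of each vertex $v$ a ray $\gamma_u\in\Gamma_{i+1}$ that passes through $a_v$ and is near-tangent to $\alpha_{i+1}$ at a point as close to $a_v$ as we wish; the origin $a_u\in\alpha_{i+2}$ is then chosen so that the origins of the level-$(i+1)$ rays appear along $\alpha_{i+2}$ in precisely the order $\tau$. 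Property~(b) holds by construction, (a) holds provided $k$ exceeds the depth of $T$, and (c) is exactly what the freedom in the near-tangent rays through the fixed point $a_v$ buys us: as the near-tangency and the origin vary, the origin sweeps a small sub-arc of $\alpha_{i+2}$, and we use disjoint, correctly ordered sub-arcs for the various parents and children. Checking that this freedom is genuinely available is a continuity argument resting on the defining property of the reference frame (a line tangent to $\alpha_{i+1}$ meets $\alpha_{i+2}$) together with the harmless perturbation of all coordinates.

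It remains to verify that any family of rays satisfying (a)--(c) realizes the complement of $T$; by the inductive hypothesis only the pairs involving a level-$(i+1)$ vertex are new. Since $\tau$ is a breadth-first order of a tree, the adjacent pairs of $T$ are exactly the parent--child pairs, which lie at consecutive levels; so for every other pair we must show the two rays intersect, and for parent--child pairs we must show they are disjoint. If $x$ is at level $i+1$ and $y$ at level $\le i-1$, Lemma~\ref{lem:refint1} gives $\gamma_x\cap\gamma_y\ne\emptyset$, as required since such a pair is non-adjacent. If $x$ is at level $i+1$ and $y$ at level $i$ with $y$ not the parent of $x$, then $\gamma_x$ is near-tangent to $\alpha_{i+1}$ at a point near $a_{\mathrm{parent}(x)}$, which is distinct from $a_y$, so Lemma~\ref{lem:refint2} (which applies to a near-tangent ray just as well as to a tangent one, no degeneracy being present) gives $\gamma_x\cap\gamma_y\ne\emptyset$, again matching non-adjacency. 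If instead $y$ is the parent of $x$, one checks directly that $\gamma_x$ and $\gamma_y$ meet only at the point $a_y$, which lies on neither open ray: $\gamma_x$, being near-tangent to $\alpha_{i+1}$ through $a_y$, hugs $\alpha_{i+1}$ near $a_y$ and leaves along the tangent direction, never entering the region between $\alpha_{i+1}$ and $\alpha_i$ into which $\gamma_y$ sets off from $a_y$. Finally, if $x$ and $y$ are both at level $i+1$, both rays are near-tangent to $\alpha_{i+1}$, and we claim any two such rays cross, which matches the fact that same-level vertices of a tree are non-adjacent; when $x$ and $y$ are siblings this is immediate, since $\gamma_x$ and $\gamma_y$ both pass through the common point $a_{\mathrm{parent}(x)}$.

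The main obstacle is this last claim: two rays of $\Gamma_{i+1}$ that are near-tangent to the arc $\alpha_{i+1}$ must intersect. It does not follow from Lemmas~\ref{lem:refint1}--\ref{lem:refint2} and is where the explicit geometry of the reference frame is used. The plan is to exploit that $\alpha_{i+1}$ is a convex circular arc: the tangent lines to it at two distinct points meet on the far side of the chord joining the tangency points, and, because the origins of our two rays lie on $\alpha_{i+2}$ --- which the construction places so that every tangent line of $\alpha_{i+1}$ crosses it --- one shows this meeting point lies on the forward part of each ray; replacing ``tangent'' by ``near-tangent'' displaces the meeting point only slightly and the conclusion persists. (This is the picture behind the word ``snooker'': the ray $\gamma_u$ is a shot fired from the cushion $\alpha_{i+2}$ that just grazes the cushion $\alpha_{i+1}$ at its parent's contact point.) Granting this fact, the induction carries through all the levels of $T$, and taking $k$ odd and larger than the depth of $T$ makes the families $\Gamma_0,\Gamma_1,\dots$ that we use all available, which proves the lemma.
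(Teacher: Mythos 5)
Your proposal is correct and matches the paper's proof in all essentials: the same level-by-level construction in which each child's ray passes through its parent's origin and is near-tangent to the parent's arc, with origins ordered along the next arc according to $\tau$, and the same verification via Lemmas~\ref{lem:refint1} and~\ref{lem:refint2} plus a convexity argument for same-level pairs (the paper justifies the latter by observing that the order of the origins on $\alpha_{i+1}$ and the order of the intersections with $\alpha_i$ are reversed, rather than by your tangent-lines-meet argument, but these are two views of the same picture; the paper also makes the placement explicit by regularly spacing the children's origins between $\ell_v^-\cap\alpha_{i+1}$ and $\ell_v^+\cap\alpha_{i+1}$). One minor slip: for a parent--child pair the point $a_y$ \emph{does} lie on the child's open ray; disjointness holds because the two supporting lines meet only at $a_y$, which is the excluded origin of the parent's ray.
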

\begin{proof}
	Consider a reference frame with $k$ larger than the depth of $T$.
	The construction we provide is iterative over the levels of $T$.
	Note that, since we provide a snooker representation, it is enough to tell for each vertex $v\not= r$
	the origin $a_v$ of the ray $\gamma_v$. Property (b) of the snooker representation provides
	another point on the ray $\gamma_v$, and thus $\gamma_v$ is uniquely defined.
	The ray $\gamma_r$ for the root $r$ is the ray of $\Gamma_0$ with origin $a_r$
	in the center of $\alpha_1$. 
	
	Consider any level $i>1$ and assume that we have a representation of the vertices
	at level $i-1$. Consider a vertex $v$ at level $i-1$ and let 
	$u_1,\dots, u_d$ denote its $d$ children. If the successor $v^+$ of $v$ is also at level $i-1$,
	we take $a_v^+= a_{v^+}$, and else we take $a_v^+$ to be an endpoint of $\alpha_i$ such
	that no other origin is between the endpoint and $a_v$.
	See Figure~\ref{fi:snooker}.
	Similarly, if the predecessor $v^-$ of $v$ is at level $i-1$,
	we take $a_v^-= a_{v^-}$, and else we take $a_v^-$ to be an endpoint of $\alpha_i$ such
	that no other origin is between the endpoint and $a_v$.
	(If $v$ is the only one vertex at level $i$, we also make sure that $a_v^-\not= a_v^+$.) 
	Let $\ell_v^+$ be the line through $a_v$ and $a_v^+$.
	Similarly, let $\ell_v^-$ be the line through $a_v$ and $a_v^-$.
	We then choose the points $a_{u_1},\dots, a_{u_d}$ 
	on the portion of $\alpha_{i+1}$ contained between $\ell_v^+$ and $\ell_v^-$ such that
	the $d+2$ points $\ell_v^-\cap \alpha_{i+1}, a_{u_1},\dots, a_{u_d}, \ell_v^+\cap \alpha_{i+1}$ 
	are regularly spaced.
	Since the ray $\gamma_{u_j}$ has origin $a_{u_j}$ and passes through $a_v$, this finishes the description
	of the procedure. 
	Because $a_{u_j}$ lies between $\ell_v^+$ and $\ell_v^-$, 
	the ray $\gamma_{u_j}$ either intersects $\alpha_i$ twice or is tangent to $\alpha_i$,
	and thus $\gamma_{u_j}\in \Gamma_{i}$.
	
	\begin{figure}
	\centering
	\includegraphics[page=5,width=.9\textwidth]{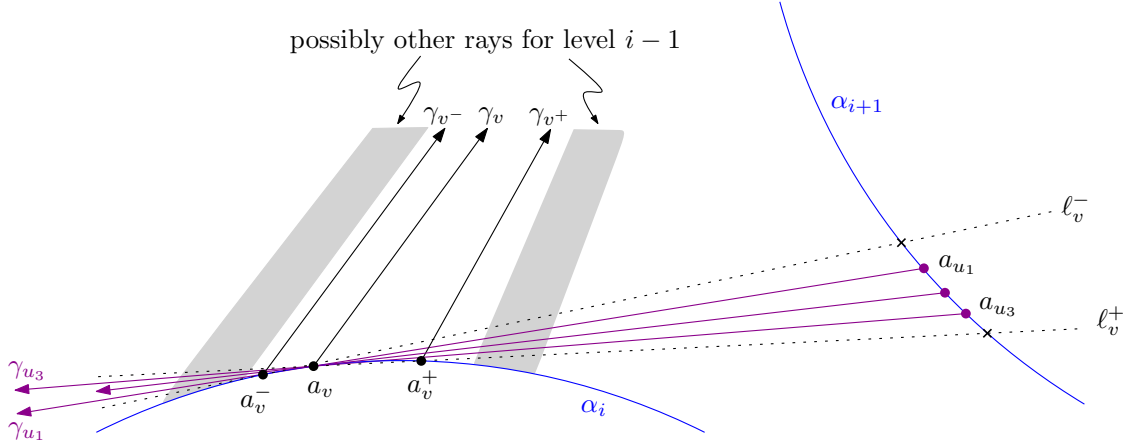}
	\caption{An example showing the construction of a snooker representation when $v$ has three children
		and $v^+$ and $v^-$ are at the same level as $v$.}
	\label{fi:snooker}
	\end{figure}

	Recall that any ray from $\Gamma_i$ intersects any ray from $\Gamma_j$ when $|j-i|>1$.
	Therefore, vertices from levels $i$ and $j$, where $|i-j|>1$, intersect.
	For vertices $u$ and $v$ at levels $i-1$ and $i$, respectively, 
	the convexity of the curve $\alpha_{i}$ and the choices for $a_v$ imply that $\gamma_v$ 
	intersects $\gamma_u$ if and only if $u$ is not the parent of $v$ in $T$. 
	For vertices $u$ and $v$ at the same level $i$, the rays $\gamma_u$ and $\gamma_v$
	intersect: if they have the same parent $w$, then they intersect on $a_w$,
	and if they have different parents, the order of their origins $a_u$ and $a_v$ on $\alpha_{i+1}$ 
	and the order of their intersections with $\alpha_{i}$ are reversed.
\end{proof}

\subsection{A tree with a few short paths}
\label{sec:construction}
Let $T$ be an embedded tree with a linear order $\tau$.
An \DEF{admissible extension} of $T$ is a graph $P$ with the following properties
\begin{itemize}
	\item $P$ is the union of vertex-disjoint paths (i.e., two
          paths don't share internal vertices but they are allowed to
          share endpoints);
	\item each maximal path in $P$ has 3 or 4 vertices;
	\item the endpoints of each maximal path in $P$ are leaves of $T$
		that are consecutive and at the same level;
	\item the internal vertices of any path in $P$ are not vertices of $V(T)$.
\end{itemize}
Note that $T+P$ is a planar graph because we only add paths between consecutive leaves.

\begin{lemma}
\label{le:adjacency}
	Let $T$ be an embedded tree and let $P$ be an admissible extension of $T$.
	The complement of $T+P$ is a ray intersection graph.
\end{lemma}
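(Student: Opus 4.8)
The plan is to start from a snooker representation of $T$, which exists by Lemma~\ref{le:snooker}, and then show that each maximal path of the admissible extension $P$ can be realized by adding a few extra rays near the relevant curves $\alpha_i$, without disturbing the already-placed rays. Concretely, let $e,f$ be the two endpoints of a maximal path $Q$ in $P$; by the definition of admissible extension they are consecutive leaves at the same level, say level $i$, so their rays $\gamma_e,\gamma_f\in\Gamma_i$ have origins $a_e,a_f$ that are consecutive along $\alpha_{i+1}$ (property~(c)). I would realize the one or two internal vertices of $Q$ by rays whose origins sit on the tiny piece of $\alpha_{i+1}$ strictly between $a_e$ and $a_f$ — an arc that, by consecutiveness, contains no origin of any other ray in the representation. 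The complement-graph requirements translate into: the internal-vertex rays must pairwise \emph{miss} each other along the path (so adjacent path vertices are represented by disjoint rays), the internal rays must \emph{miss} $\gamma_e$ respectively $\gamma_f$ exactly at the path-adjacency, and every internal ray must \emph{intersect} every other ray of the whole representation. Since all rays of $\Gamma_i$ already pairwise intersect and rays from levels differing by more than one always intersect (Lemmas~\ref{lem:refint1},~\ref{lem:refint2}), the only non-automatic condition is the controlled non-intersection between consecutive vertices of $Q$ and between a path-endpoint and its path-neighbour.

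The key geometric device is to use \emph{near-tangent} rays to $\alpha_i$. For a path on three vertices $e,x,f$, I would pick the origin $a_x$ on the open arc of $\alpha_{i+1}$ between $a_e$ and $a_f$, and direct $\gamma_x$ to be near-tangent to $\alpha_i$ — more precisely, tangent to $\alpha_i$ at a point $y$, then perturbed so that it just barely intersects $\alpha_i$ twice in a small neighbourhood of $y$, so that $\gamma_x\in\Gamma_i$. By Lemma~\ref{lem:refint2}, a ray tangent to $\alpha_i$ at $y$ meets every ray of $\Gamma_{i-1}$ except the one originating at $y$; I choose $y$ to be a point of $\alpha_i$ that is \emph{not} an origin of any ray $\gamma_v$ with $v$ at level $i-1$, which guarantees $\gamma_x$ meets all level-$(i-1)$ rays, and then continuity gives the same for the near-tangent perturbation. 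Now to make $\gamma_x$ \emph{miss} $\gamma_e$ (and also $\gamma_f$ if the path is $e\,x\,f$), I exploit that $\gamma_e$ and $\gamma_f$ both pass through the parent origin $a_w$ on $\alpha_i$ and that, near their second crossings of $\alpha_i$, they fan out; choosing $y$ and the perturbation direction appropriately places $\gamma_x$ on the ``outside'' so that it avoids $\gamma_e$ but, being forced by the reference-frame geometry to cross $\alpha_{i-1}$, still meets everything at levels $\le i-1$; symmetrically for missing $\gamma_f$. For a path on four vertices $e\,x\,y'\,f$ I additionally need $\gamma_x$ and $\gamma_{y'}$ to miss each other while $\gamma_x$ meets $\gamma_f$ and $\gamma_{y'}$ meets $\gamma_e$; here I place $a_x$ and $a_{y'}$ as two distinct points on the open arc between $a_e$ and $a_f$ and aim them near-tangent to $\alpha_i$ at two points $y_x,y_{y'}$ chosen so that the reversal-of-order phenomenon used in Lemma~\ref{le:snooker} for same-level rays is defeated for the single pair $(x,y')$ — i.e.\ their origins and their $\alpha_i$-crossings appear in the \emph{same} cyclic order — while still being reversed (hence intersecting) against every genuine level-$i$ ray.

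The main obstacle I anticipate is precisely this simultaneous bookkeeping: I must arrange a bounded number of near-tangent rays so that a prescribed sparse set of pairs among them are disjoint while all other incidences (with each other, with $\gamma_e,\gamma_f$, and with the Theta-classes $\Gamma_j$ for all $j$) are forced intersections. The saving grace is locality: all the freedom I need lives in an arbitrarily small neighbourhood of the short sub-arc of $\alpha_{i+1}$ between $a_e$ and $a_f$ and of a short sub-arc of $\alpha_i$, the extensions for different maximal paths of $P$ occupy disjoint such neighbourhoods (because the paths have disjoint, pairwise non-adjacent endpoint pairs, being between \emph{consecutive} leaves), and the reference frame is built with no degeneracies, so every ``just barely tangent'' choice can be made and then slightly perturbed. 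I would therefore handle one maximal path at a time, verify the finitely many incidence conditions for its (at most two) new rays by the tangency/convexity arguments above, note that the perturbations are confined to a region avoiding all previously placed rays, and conclude that the resulting family of rays represents the complement of $T+P$; since the coordinates involved are algebraic of bounded degree and everything is non-degenerate, a rational perturbation gives a polynomial-size representation, which is what the subsequent hardness argument needs.
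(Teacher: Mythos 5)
Your overall strategy coincides with the paper's: start from the snooker representation of Lemma~\ref{le:snooker} and insert, for each maximal path of $P$, one or two new rays whose origins live on the empty sub-arc of $\alpha_{i+1}$ between the consecutive origins $a_u$ and $a_v$, with locality guaranteeing that different paths do not interfere. But the concrete geometric devices you propose do not work, and you miss the two that do. For a path $u\,w\,v$ with one internal vertex, you want $\gamma_w$ to be a ray of $\Gamma_i$, near-tangent to $\alpha_i$, with origin strictly between $a_u$ and $a_v$, missing both $\gamma_u$ and $\gamma_v$. By the very order-reversal criterion you invoke for same-level rays, this is impossible: since $u$ and $v$ are non-adjacent in $T$, the rays $\gamma_u$ and $\gamma_v$ intersect (either at a shared parent's origin, where they form a wedge, or because their crossings of $\alpha_i$ appear in reversed order), and any ray starting strictly between $a_u$ and $a_v$ and reaching $\alpha_i$ must cross at least one of them. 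The paper's device is different and essential: $\gamma_w$ is the full \emph{line} through the two origins $a_u$ and $a_v$ (equivalently a ray with origin far away); it meets $\gamma_u$ and $\gamma_v$ only at their origins, which do not belong to the open rays, and it crosses everything else. No tangency to $\alpha_i$ is involved, and indeed this line is not a member of $\Gamma_i$ at all.

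For a path $u\,w\,w'\,v$ you again rely on unspecified near-tangent choices ``defeating the reversal'' for the one pair of new rays that must be disjoint. The paper instead makes $\gamma_w$ \emph{parallel} to $\gamma_u$ with origin $b_w$ on $\alpha_{i+1}$ very close to $a_u$ (parallel rays are disjoint outright), and makes $\gamma_{w'}$ the line through $b_w$ and $a_v$, which misses exactly $\gamma_w$ and $\gamma_v$ at their origins. This forces a case distinction you do not address: translating $\gamma_u$ towards $a_v$ while keeping its slope may or may not create an intersection with the ray of $u$'s parent, exactly one of the two endpoints has this property, and $b_w$ must be placed next to \emph{that} endpoint so that the parallel copy $\gamma_w$ still intersects the parent's ray and all other rays. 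Without the line-through-origins and parallel-ray devices (both of which exploit the convention that rays are open), the incidence bookkeeping you describe cannot be completed.
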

\begin{proof}
	We construct a snooker representation of $T$ using Lemma~\ref{le:snooker} where $k$, the number
	of levels, is the depth of $T$ plus 2 or 3, whichever is odd.
	We will use a local argument to represent each maximal path of $P$,
	and each maximal path of $P$ is treated independently.
	It will be obvious from the construction that rays corresponding to vertices
	in different paths intersect. 
	We distinguish the case where the maximal path has one internal vertex or two.
	
	Consider first the case of a maximal path in $P$ with one internal vertex.
	Thus, the path is $uwv$ where $u$ and $v$ are consecutive leaves in $T$ and $w\notin V(T)$
	is not yet represented by a ray.
	The origins $a_u$ and $a_v$ of the rays $\gamma_u$ and $\gamma_v$, respectively, 
	are distinct and consecutive along $\alpha_{i+1}$ because we have a snooker representation.
	We thus have the situation depicted in Figure~\ref{fi:scenario1}. 
	We can then just take the $\gamma_w$ to be the line through $a_u$ and $a_v$.
         (This line can also be a ray with an origin sufficiently far away.)
	This line intersects the ray of any other vertex, different that $\gamma_u$ and $\gamma_v$.
	
	\begin{figure}
		\centering
		\includegraphics[page=6,width=.9\textwidth]{figures}
		\caption{Case 1 in the proof of Lemma~\ref{le:adjacency}: 
			adding a path with one internal vertex.}
		\label{fi:scenario1}
	\end{figure}
	
	Consider now the case of a maximal path in $P$ with two internal vertices.
	Thus, the path is $uww'v$ where $u$ and $v$ are consecutive leaves in $T$ and $w,w'\notin V(T)$
	In this case, the construction depends on the relative
	position of the origins $a_u$ and $a_v$, and we distinguish two scenarios: 
	(i) shifting the origin $a_u$ of ray $\gamma_u$ towards $a_v$ while maintaining the slope 
	introduces an intersection between $\gamma_u$ and the ray for the parent of $u$
	or (ii) shifting the origin $a_v$ of ray $\gamma_v$ towards $a_u$ while maintaining the slope
	introduces an intersection between $\gamma_v$ and the ray for the parent of $v$.
	Note that exactly one of the two options must occur.
	
	Let us consider only scenario (i), since the other one is symmetric; 
	see Figure~\ref{fi:scenario2}.
	We choose a point $b_w$ on $\alpha_{i+1}$ between $a_u$ and $a_v$ very near $a_u$
	and represent $w$ with a ray $\gamma_w$ parallel to $\gamma_u$ with origin $b_w$.
	Thus $\gamma_w$ does not intersect $\gamma_u$ but intersects any other ray because we are in scenario (i). 
	Finally, we represent $w'$ with the line $\gamma_{w'}$ through points $b_w$ and $a_v$.
	With this, $\gamma_{w'}$ intersects the interior of any other ray but $\gamma_w$ and $\gamma_v$,
	as desired. Note that $\gamma_{w'}$ also intersects the rays for vertices in the same level because
	those rays are near-tangent to $\alpha_{i+2}$, which is intersected by $\gamma_w$.

	Note that in every case, since the rays $\gamma_w$ and $\gamma_{w'}$, respectively, are actually lines, 
         no new ray has its origin on $\alpha_{i+2}$. Hence rays having consecutive origins on $\alpha_{i+2}$ 
         remain so after the inclusion of a path in level $i+1$.
\end{proof}
	\begin{figure}
		\centering
		\includegraphics[page=7,width=.9\textwidth]{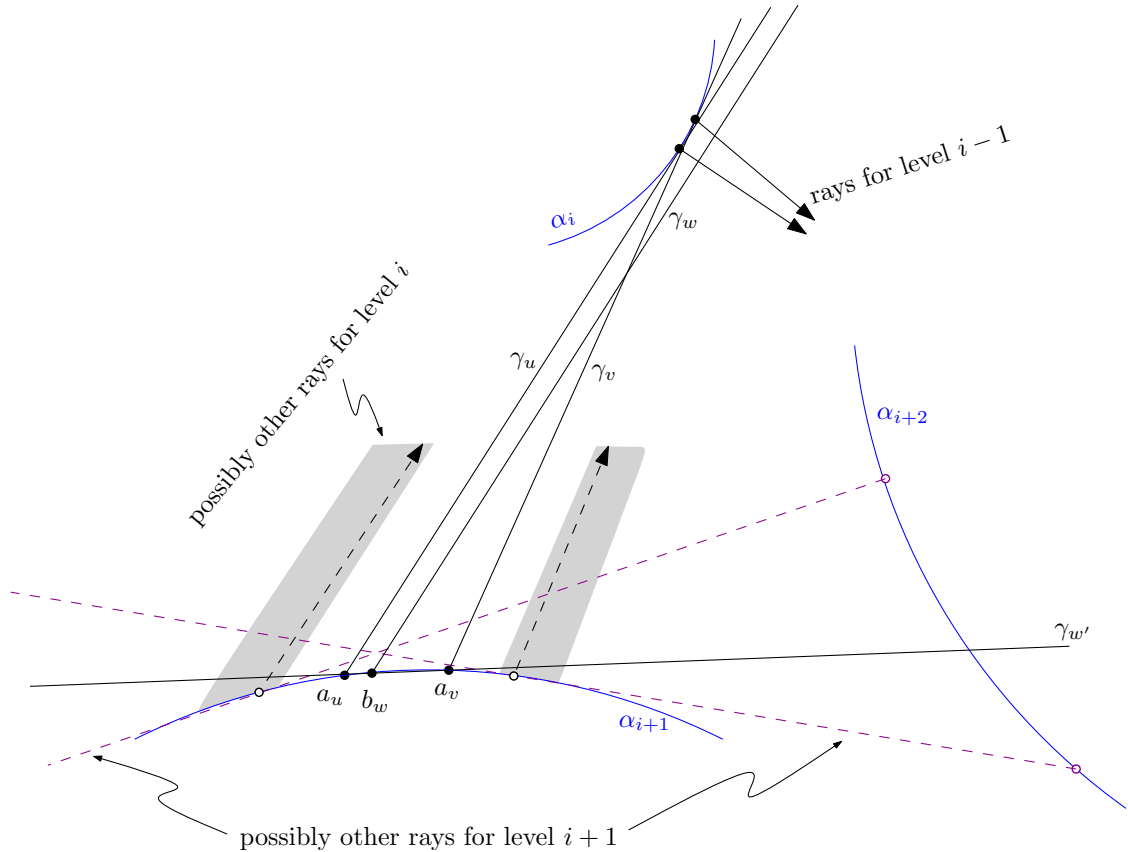}
		\caption{Case 2(i) in the proof of Lemma~\ref{le:adjacency}: 
			adding a path with two internal vertices.}
		\label{fi:scenario2}
	\end{figure}

\begin{lemma}
\label{le:findT+P}
	Any embedded planar graph $G$ has an even subdivision $T+P$, where $T$ is an embedded tree and $P$
	is an admissible extension of $T$.
	Furthermore, such $T$ and $P$ can be computed in polynomial time.
\end{lemma}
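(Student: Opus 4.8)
The plan is to start from a combinatorial embedding of $G$ and build a spanning tree together with the ``leftover'' edges, then replace each leftover edge by a subdivided path routed alongside a branch of the tree so that its two endpoints become consecutive leaves at a common level. First I would fix a combinatorial embedding $\pi_G$ of $G$ and compute a BFS spanning tree $T_0$ of $G$ rooted at an arbitrary vertex $r$ (with an arbitrary first edge $rs$), together with the induced linear order $\tau$. The non-tree edges $E(G)\setminus E(T_0)$ are exactly the edges that we must realise through $P$. The key point is that a BFS tree has the property that every non-tree edge joins vertices whose levels differ by at most one; this is what will make the routing possible after a bounded number of subdivisions.

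Next I would handle each non-tree edge $e=xy$ separately. The idea is to subdivide $e$ and to ``grow'' the subdivision vertices as a new pendant path hanging from $T$, choosing where to attach it and how long to make it so that the two endpoints of the path end up being \emph{leaves of the enlarged tree}, at the \emph{same level}, and \emph{consecutive} in the order $\tau$. Concretely, for each of $x$ and $y$ we would add a short pendant path into the tree ending in a new leaf $x'$ (resp.\ $y'$); these pendant paths are part of the new tree $T$, not of $P$, and they let us place $x'$ and $y'$ at a prescribed level and in prescribed positions of $\tau$ by inserting them at the appropriate place in the clockwise rotation at $x$ and $y$. The remaining portion of the subdivided edge $e$, a path from $x'$ to $y'$ with no internal $T$-vertices, is then a maximal path of $P$. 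By subdividing $e$ an appropriate number of times and adjusting the lengths of the pendant paths we can force the $P$-part to have exactly $3$ or $4$ vertices, and we can force the total number of subdivisions of each original edge to be even. Since we only route $P$-paths between vertices that are consecutive at the same level, planarity of $T+P$ is preserved, as already observed in the definition of admissible extension.

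I would then argue the three bookkeeping points. \textbf{(1) Parity:} each original edge of $G$ is subdivided on its tree part (if it became a pendant path) or as a non-tree edge; in either case we have enough free choices (lengthening a pendant path by $2$, or subdividing the $P$-path by adding a pair of vertices, keeping it of length $3$ or $4$ by rerouting) to make the number of subdivisions even. \textbf{(2) Consecutiveness and same level:} because $x'$ and $y'$ are freshly created leaves we can choose their parents and their positions in the rotation systems so that nothing else is inserted in $\tau$ between them, and we can choose the pendant-path lengths so that they land at a common level $i$; consecutiveness then follows from the definition of $\tau$. \textbf{(3) Polynomial time:} the BFS tree, the order $\tau$, and the per-edge modifications are all clearly polynomial, and the number of subdivision vertices added per edge is $O(1)$, so the whole $T+P$ has size polynomial in $|G|$.

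The main obstacle, and the part that needs the most care, is point (2): simultaneously making the two path-endpoints \emph{leaves}, at the \emph{same level}, and \emph{consecutive} in $\tau$, while respecting the fixed combinatorial embedding and keeping the parity of every edge even. The delicate interaction is between the clockwise-order rule (iii) defining $\tau$ and the requirement that the endpoints be consecutive: one has to insert the new pendant edges at $x$ and $y$ in exactly the right slots of $\pi_x$ and $\pi_y$, and possibly pad with extra pendant structure, so that in the BFS order of the enlarged tree the vertices $x'$ and $y'$ are adjacent in $\tau$ and at equal depth. I expect this to require a careful case analysis according to the original levels of $x$ and $y$ (equal, or differing by one) and according to their relative position in $\tau$, together with an explicit choice of pendant-path lengths; the parity constraint is then absorbed by always having the freedom to add vertices two at a time.
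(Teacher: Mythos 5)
There is a genuine gap, and it is exactly at the point you flag as ``the part that needs the most care''. You propose to handle each non-tree edge $e=xy$ independently, subdividing it $O(1)$ times and splitting the subdivision into two pendant tails (absorbed into $T$) plus a short middle piece (put into $P$), hoping that local choices of tail lengths and rotation slots make the two new leaves consecutive at a common level. This works for a single non-tree edge, but it breaks down when non-tree edges are \emph{nested}: if the face structure places the subdivision path $Q_{e'}$ of another non-tree edge $e'$ between the two tails of $Q_e$ in the cyclic order around the relevant subtree, then vertices of $Q_{e'}$ appear between $x'$ and $y'$ in the BFS order $\tau$. The only way to keep $x'$ and $y'$ consecutive (i.e., with no other vertex \emph{at their level} between them in $\tau$) is to push them to a level strictly deeper than every vertex of every nested $Q_{e'}$ that lies between them. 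This is a global constraint, not a per-edge one, and it cannot be met with $O(1)$ subdivisions per edge: the required depth of $e$'s middle piece grows with the nesting depth of $e$ among the non-tree edges. Your claim that ``the number of subdivision vertices added per edge is $O(1)$'' is therefore false in general, and without a mechanism to coordinate the depths across nested edges the consecutiveness requirement cannot be established.

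The paper supplies exactly this missing mechanism: the non-tree edges $C=E(G)\setminus B$ dualize to a spanning tree $C^*$ of the dual graph rooted at the outer face, and the subdivision count is assigned bottom-up along $C^*$ by $k_e=4$ at the leaves and $k_e=2+\max k_{e'}$ over the descendants $(e')^*$ of $e^*$. This forces the endpoints of $P_e$ to sit one level deeper than the endpoints of every nested $P_{e'}$, so the tree parts of all nested subdivision paths terminate strictly above the level of $P_e$'s endpoints, and consecutiveness follows by induction on $C^*$. Note that $k_e$ can be $\Theta(n)$, which is still polynomial, so the running-time claim survives; but the inductive, dual-tree-driven assignment of $k_e$ is the essential idea your plan is missing. (A smaller point: the embedding of $G$ is given, so you do not get to choose where the subdivision path of $e$ sits in $\pi_x$ and $\pi_y$; it must occupy the slot of $e$, which is what makes the nesting issue unavoidable rather than something you can design away.)
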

\begin{proof}
	Let $r$ be an arbitrary vertex in the outer face $f$ of $G$.
	Let $B$ be the set of edges in a BFS tree of $G$ from $r$.
	With a slight abuse of notation, we also use $B$ for the BFS tree itself.
	Let $C=E(G)\setminus B$. In the graph $G^*$ dual to $G$, 
	the edges $C^*=\{c^*\mid c\in C\}$ are a spanning tree of $G^*$, 
	which with a slight abuse of notation we also denote by $C^*$. 
	We root $C^*$ at the node $f^*$, corresponding to the outer face. This is illustrated on Figure~\ref{fi:findT+P}.	
	\begin{figure}
		\centering
		\includegraphics[page=8,width=.9\textwidth]{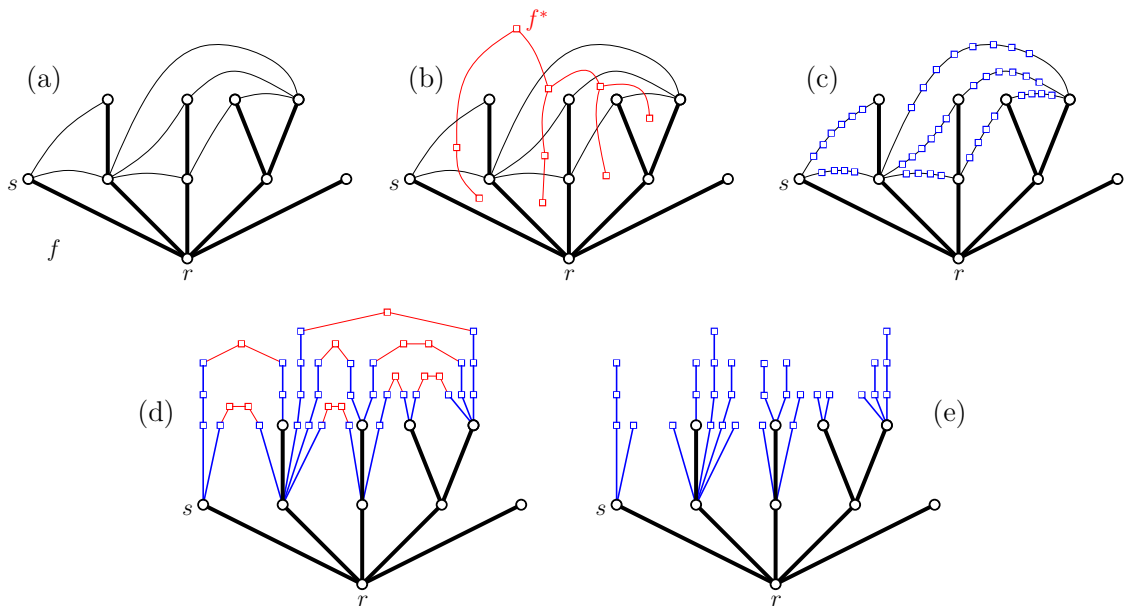}
		\caption{Figure for the proof of Lemma~\ref{le:findT+P}.
			(a) A planar graph with a BFS from $r$ in bold.
			(b) The spanning tree $C^*$ of the dual graph.
			(c) The subdivision of the edges from $C$.
			(d) The resulting subdivided graph with $P$ marked in red and thinner lines.
			(e) The resulting graph $T$ after the removal of $P$, drawn such that
				the height of the vertices corresponds to their level.
			}
		\label{fi:findT+P}
	\end{figure}
	
	We define for each edge $e\in C$ the number $k_e$ of subdivisions it will undertake
	using a bottom-up approach.
	Any edge $e\in C$ that is incident to a leaf of $C^*$ gets assigned $k_e=4$.
	For any other edge $e\in C$, we define $k_e$ as $2$ plus the maximum $k_{e'}$
	over all descendants $(e')^*\in  C^*$ of $e^*$ in $C^*$.
	Let $H$ be the resulting subdivision. 
	For any edge $e\in C$, let $Q_e$ be the path in $H$ that corresponds to the subdivision of $e$.
	We use in $H$ the combinatorial embedding induced by $G$.
	
	We can now compute the tree $T$ and the paths $P$. Since $B$ is a BFS tree, every edge $e\in C$ connects two vertices that are either at the same level in $B$, or at two successive levels.
	For every edge $e\in C$ that connects two vertices
	at the same level in $B$, let $P_e$ be the length-two subpath 
	in the middle of $Q_e$.
	For every edge $uv\in C$ that connects vertex $u$ at level $i$ 
	to vertex $v$ at level $i+1$ in $B$, 
	let $P_e$ be the length-three subpath obtained from the length-two subpath
	in the middle of $Q_e-u$. 
	We then take $P=\bigcup_{e\in C} P_e$ and
	take $T$ to be the graph $H-P$, after removing isolated vertices. 
	In $T$ we use the rotation system inherited from $H$ and use the edge $rs$ to define the linear order,
	where $rs$ is an edge in $f$.
	
	It is clear from the construction that $T+P=H$ is an even subdivision of $G$.
	We have to check that $P$ is indeed an admissible extension of $T$.
	The maximal paths of $P$ are vertex disjoint and connect leaves of $P$
	because the paths $Q_e$, $e\in C$, are edge-disjoint and each $P_e$ is strictly
	contained in the interior of $Q_e$.
	Since in $H-P$ we removed isolated vertices, it is clear that 
	no internal vertex of a path of $P$ is in $T$. The graph $T$ is indeed a tree
	because, for every edge $e\in C$, we have removed some edges from its subdivision $Q_e$.
	Since $B$ is a BFS tree and $P_{uv}$ is centered within $Q_{uv}$, when $u$ and $v$ are at the same level, or within $Q_{uv}-u$, when $u$ is one level below $v$, 
	the maximal paths of $P$ connect vertices that are equidistant from $r$ in $H$.
	
	It remains to show that the endpoints of any maximal path in $P$ are consecutive vertices
	in $T$. This is so because of the inductive definition of $k_e$. The base case is when $e\in C$ is incident to a leaf of $C^*$, and is subdivided $k_e=4$ times. The edge $e$ either connects two vertices at the same level in $B$, or at two successive levels. In both cases it can be checked that $P_e$ connects two consecutive vertices in $T$. The inductive case is as follows. We let $k_e$ be equal to $2$ plus the maximum $k_{e'}$ over all descendants $(e')^*\in  C^*$ of $e^*$ in $C^*$. By induction, all the corresponding $P_{e'}$ connect two consecutive vertices of $T$, say at level $i$ in $T$. By definition, $P_e$ will connect two consecutive vertices of $T$ at level $i+1$. 
		
	The construction of $T$ and $P$ only involves computing the BFS tree $B$, the spanning tree $C^*$, and the values $k_e$, which can clearly be done in polynomial time.
\end{proof}

Combining lemmas~\ref{le:adjacency} and \ref{le:findT+P} directly yields the following.
\begin{theorem}
	Any planar graph has an even subdivision whose complement is a ray intersection graph. Furthermore, this subdivision can be computed in polynomial time.
\end{theorem}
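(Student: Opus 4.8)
The plan is simply to combine the two structural lemmas established above. A planar graph in the statement may be given without an embedding, so I would begin by fixing an arbitrary planar embedding of $G$; this can be done in linear time and endows $G$ with a rotation system, which is the input format required by Lemma~\ref{le:findT+P}.

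Next I would invoke Lemma~\ref{le:findT+P} on the embedded graph $G$ to obtain an even subdivision $H = T+P$ of $G$, where $T$ is an embedded tree equipped with the linear order $\tau$ inherited from the embedding (rooted at a vertex $r$ of the outer face, with the distinguished boundary edge $rs$), and $P$ is an admissible extension of $T$; the lemma also guarantees that $T$ and $P$, and hence the subdivision $H$ together with all combinatorial data needed downstream, are produced in polynomial time. Then I would apply Lemma~\ref{le:adjacency} to the pair $(T,P)$: its conclusion is exactly that the complement of $T+P$ is a ray intersection graph. Since $H = T+P$ is an even subdivision of $G$ and is computed in polynomial time, both assertions of the theorem follow.

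The only point I would be careful about is that the two lemmas are \emph{compatible}: Lemma~\ref{le:findT+P} must output precisely the kind of object that Lemma~\ref{le:adjacency} consumes. Concretely I would check that the returned tree $T$ carries a well-defined linear order of the form $\tau(\pi_T,r,rs)$, that the maximal paths of $P$ have $3$ or $4$ vertices with endpoints that are consecutive leaves of $T$ at a common level (guaranteed by the inductive choice of the subdivision parameters $k_e$), and that the finite depth of $T$ allows a reference frame with odd parameter $k$ exceeding that depth — exactly the hypothesis under which the snooker representation of Lemma~\ref{le:snooker} and the extension of Lemma~\ref{le:adjacency} are built. None of this requires new work beyond reading off the statements, so I do not expect any genuine obstacle; the substance of the argument lives entirely in Lemmas~\ref{le:snooker}, \ref{le:adjacency}, and~\ref{le:findT+P}.

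I would also add a brief remark, not strictly needed for the theorem as stated but relevant to the intended NP-hardness consequence, that the ray representation underlying Lemma~\ref{le:adjacency} admits a polynomial-size description as well: the coordinates of the reference frame and of the iterative snooker construction are built by elementary operations, and since the construction avoids degeneracies one may perturb them to nearby rationals of polynomially bounded bit length without destroying any required incidence or non-incidence.
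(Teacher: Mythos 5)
Your proposal is correct and matches the paper exactly: the theorem is obtained by applying Lemma~\ref{le:findT+P} to get the even subdivision $T+P$ and then Lemma~\ref{le:adjacency} to represent its complement by rays, with the compatibility of the two lemmas being immediate from their statements. The extra remark about integer coordinates is not needed for this theorem but anticipates the paper's subsequent Lemma~\ref{le:adjacency2}.
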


\subsection{Polynomial-time construction}
\label{sec:NP}

The construction of the ray intersection graph in Lemmas~\ref{le:snooker} and \ref{le:adjacency}, uses real coordinates. We wish to prove that the maximum clique problem is NP-hard even when a geometric description of the ray intersection graph is given as input. Hence we need to argue how to carry out this construction using integer coordinates, each using a polynomial number of bits. In what follows, we let $n=|V(T+P)|$.

\begin{lemma}
	In the construction of Lemma~\ref{le:adjacency},
	\begin{itemize}
		\item any two points are at distance at least $n^{-O(n)}$ and at most $O(n)$;
		\item the distance between any line through two origins and any other point
			is at least $n^{-O(n)}$, unless the three points are collinear.
	\end{itemize}
\end{lemma}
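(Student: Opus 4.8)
The statement is a quantitative metric bound on the snooker-style construction of Lemma~\ref{le:adjacency}: every pairwise distance stays in the window $[n^{-O(n)}, O(n)]$, and every point that is not forced to lie on a line-through-two-origins stays at distance $\ge n^{-O(n)}$ from that line. The plan is to track, level by level, the bit-complexity of all the primitive objects (points $p_i$, rectangles $R_i$, arcs $\alpha_i$, origins $a_v$, rays $\gamma_v$) and show that passing from level $i$ to level $i+1$ multiplies the ``algebraic complexity'' by at most a fixed polynomial factor, so after $k = O(n)$ levels everything is still bounded by $n^{O(n)}$. The upper bound $O(n)$ on distances is the easy half: the entire reference frame lives inside a disk of radius $2$ around the origin (the $p_i$ are on the unit circle, the rectangles $R_i$ are congruent copies of one fixed small rectangle, and all arcs $\alpha_i$ lie between consecutive rectangles), and every origin $a_v$ lies on some $\alpha_{i+1}$, hence in that disk; the only unbounded objects are the rays/lines themselves, but the \emph{points} named in the construction (origins, arc endpoints, the points $p_i$, $q_i$) are all within $O(1)$ of the origin, giving even $O(1)$, which is certainly $O(n)$.

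The lower bounds are where the work is. First I would set up the reference frame symbolically: with $\theta = \frac{k-1}{k}\pi$, the coordinates of $p_i = (\cos(i\theta), \sin(i\theta))$ are algebraic numbers of degree $O(k)$ with defining polynomials of coefficient-size polynomial in $k$; the rectangle vertices $q_i, m_i, t_i, r_i$ and the arc $\alpha_i$ (center and radius) are obtained from the $p_i$ by a bounded number of rational operations and one square root (for the point $t_i$ with $|m_it_i| = |m_ip_i|$ and for the inradius-type construction of $\alpha_i$), so they are algebraic of degree $O(k)$ with polynomially-bounded height. I would then prove by induction on the level that the origin $a_v$ of each ray and the supporting line of each ray $\gamma_v$ are algebraic of degree $n^{O(1)}$ and height $n^{O(n)}$ at level $i$: the inductive step takes the previously-placed origins $a_v^-, a_v^+$, forms the lines $\ell_v^\pm$, intersects them with $\alpha_{i+1}$, and places $d+2$ ``regularly spaced'' points on the arc between the two intersection points — regular spacing on a circular arc between two given points is again an algebraic operation of bounded degree (it amounts to dividing an angle into $d+1$ equal parts, i.e. taking Chebyshev-type roots, or simply dividing the chord parametrisation), multiplying degree and height each by a polynomial factor per level. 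Since there are $k = O(n)$ levels and each vertex contributes a bounded-degree step, the final algebraic numbers have degree $n^{O(1)}$ and height $2^{n^{O(1)}}$; here I would instead push for the sharper claim that height grows by a \emph{multiplicative} $n^{O(1)}$ per level (because each step is a fixed rational expression in a bounded number of previously-constructed quantities of bounded individual size), which yields height $n^{O(n)}$ exactly as stated.

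With those bounds in hand, the two inequalities follow from a root-separation / Mahler-measure argument. For any two distinct named points $x, y$ in the construction, the quantity $\|x-y\|^2$ is a nonzero algebraic number of degree $n^{O(1)}$ and height $n^{O(n)}$ (sum/product of the coordinate quantities), so by the standard lower bound on the absolute value of a nonzero algebraic number in terms of its degree and height (a consequence of $\prod_\sigma |\sigma(\beta)| \ge 1$ for a nonzero algebraic integer, after clearing denominators), $\|x-y\|^2 \ge n^{-O(n)}$, hence $\|x-y\| \ge n^{-O(n)}$. For the line-distance bound, the distance from a point $z$ to the line through origins $a, b$ is $\frac{|\det(b-a, z-a)|}{\|b-a\|}$; the numerator is, when $z$ is not on the line, a nonzero algebraic number of degree $n^{O(1)}$ and height $n^{O(n)}$, so $\ge n^{-O(n)}$, and the denominator is $O(1)$ by the upper bound, giving distance $\ge n^{-O(n)}$. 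The crucial input — that these algebraic numbers are actually \emph{nonzero} — is exactly the statement of Lemma~\ref{le:adjacency} together with the genericity of the construction (distinct origins for distinct vertices by property (c) of snooker representations, non-collinearity whenever the construction does not force it); I would cite the ``construction does not rely on degeneracies'' remark from Section~\ref{sec:construction} for this.

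**Main obstacle.** The delicate point is controlling the height growth across the $O(n)$ levels tightly enough to land at $n^{O(n)}$ rather than the naive $2^{n^{O(1)}}$ — i.e.\ showing that the ``regular spacing of $d+2$ points on an arc'' and the ``line $\cap$ arc'' operations multiply the height by only a polynomial factor (not square it) at each level. Handling this cleanly may require working not with a single algebraic number per object but with a shared number field / common denominator for the whole level, or representing each origin by a rational point on a rational parametrisation of the circle $\alpha_{i+1}$ so that the per-step operations become genuinely rational of polynomial bit-size, with the only irrationalities (the $\cos(i\theta)$ from the reference frame) fixed once and for all of bounded degree $O(n)$. If that reduction to ``one fixed algebraic extension of degree $O(n)$, plus rational arithmetic of polynomial size thereafter'' can be carried out, the height bound $n^{O(n)}$ and hence both displayed inequalities follow immediately from the root-separation bound, and this is the route I would take.
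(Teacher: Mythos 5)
Your approach is genuinely different from the paper's, and it has a gap that I do not think your proposed workaround closes. You want to bound the degree and height of the algebraic numbers produced at each level and then invoke a root-separation bound; the entire weight of the argument therefore rests on your claim that the height grows by only a \emph{multiplicative} $n^{O(1)}$ factor per level, which you justify by saying each step is ``a fixed rational expression in a bounded number of previously-constructed quantities of bounded individual size.'' That premise is false: the previously constructed quantities do \emph{not} have bounded individual size, since their size grows with the level. Concretely, forming the line $\ell_v^{\pm}$ through two origins of bit-size $B_i$ produces coefficients (cross products of coordinates) of bit-size $\approx 2B_i$, and intersecting with $\alpha_{i+1}$ and re-spacing preserves that order of magnitude, so $B_{i+1}=\Omega(2B_i)$ and after $O(n)$ levels the heights are $2^{2^{O(n)}}$, not $n^{O(n)}$. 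The root-separation bound then only yields a separation of $2^{-2^{O(n)}}$, which is far too weak. Your suggested fix (a shared number field, or a rational parametrisation of each $\alpha_{i+1}$) does not remove the doubling, because the parameter of $\ell_v^{\pm}\cap\alpha_{i+1}$ is still a bounded-degree rational/algebraic function of \emph{two} level-$i$ parameters, and composing such functions over $n$ levels generically squares the bit-size at each step. In addition, ``regularly spaced'' points on an arc means equal angular spacing, which is an angle $(d+1)$-section and multiplies the \emph{degree} by $d+1$ per level, so even the degree bound $n^{O(1)}$ is unjustified. You flag all of this as the ``main obstacle'' and leave the resolution conditional, so the proof as written is incomplete.

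The paper avoids algebraic bookkeeping entirely and bounds the geometric quantity directly: it defines $\delta_i$ as the minimum separation between features (origins and arc endpoints) on $\alpha_i$, notes $\delta_1=\Omega(1)$, and proves $\delta_{i+1}=\Omega(\delta_i/n^2)$ by a purely geometric argument --- the circle supporting $\alpha_i$ has radius $\Theta(n)$, so two features at distance $\delta_i$ subtend an angle $\Omega(\delta_i/n)$ between $\ell_v^-$ and $\ell_v^+$, and regularly spacing $d+2=O(n)$ points in the resulting sub-arc of $\alpha_{i+1}$ costs only another factor $O(n)$. Iterating over at most $n$ levels gives $n^{-O(n)}$ immediately, and the point-to-line bound reduces to the same separation estimate. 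This is the geometric insight that makes the construction work; it cannot be recovered from degree/height considerations alone, since generic iterated ruler-and-compass constructions really do exhibit doubly-exponentially small separations. Your upper bound of $O(n)$ (all named points lie within $O(1)$ of the origin, with the supporting-circle centres at distance $\Theta(n)$) is fine and matches the paper.
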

\begin{proof}
	Recall that the circle containing points $p_0,p_1,\dots$ has radius $1$.
	The rectangles $R_i$ are all congruent and have two diagonals of length $|p_0 p_1|=\Theta(1)$.
	The small side of rectangle $R_i$ has size $\Theta(1/n)$ and
	both diagonals of $R_i$ form an angle of $\Theta(1/n)$. It follows that
	the center of the circles supporting $\alpha_i$ have coordinates $\Theta(n)$.
	For points from different curves $\alpha_i$ there is at least a separation of $\Theta(1/n)$.

	We first bound the distance between the origins for the rays representing vertices of $T$.
	Let us refer to the origins of rays lying on $\alpha_i$
	and the extremes of $\alpha_i$ as \emph{features} on the curve $\alpha_i$ . 
	Let $\delta_i$ be the minimum separation between any two features on the curve $\alpha_i$.
	On the curve $\alpha_1$ there are three features: the two extremes of $\alpha_1$
	and the origin $a_r$ of $\gamma_r$, which is in the middle of $\alpha_1$.
	Since $\alpha_1$ has length $\Omega(1)$, it follows that $\delta_1=\Omega(1)$.
	
	We will bound the ratio $\delta_{i+1}/\delta_{i}$ for $i\ge 1$. 
	Consider the construction of Lemma~\ref{le:snooker} to determine the features
	on $\alpha_{i+1}$.
	By induction, any two consecutive features along $\alpha_{i}$
	are separated at least by $\delta_i$. Since $\alpha_{i}$ is supported
	by a circle of radius $\Theta(n)$, 
	the lines $\ell_v^+$ and $\ell_v^-$ form an angle of 
	at least $\Omega(\delta_i/n)$.
	This implies that the points
	$\ell_v^-\cap \alpha_{i+1}, a_{u_1},\dots, a_{u_d}, \ell_v^+\cap \alpha_{i+1}$ 
	have a separation of $\Omega(\delta_i/(nd))$. It follows that 
	$\delta_{i+1}=\Omega(\delta_i/(nd))=\Omega(\delta_i/n^2)$, and thus $\delta_{i+1}/\delta_i=\Omega(1/n^2)$. 
	Since $T$ has depth at most $n$, all features are at distance at least $n^{-O(n)}$.
		
	We can now argue that the origins of the rays used in the construction of Lemma~\ref{le:adjacency}
	also have a separation of $n^{-O(n)}$.
	In Case 1, we just add a line through previous features.
	In Case 2, it is enough to place $b_w$ at distance $|a_u a_v|/n$ from $a_u$, and thus the features
	keep a separation of at least $n^{-O(n)}$.
	
	The second item is a consequence of the fact that
	if a point $(a,b)$ is not on the line through $(x,y)$ and $(x',y')$
	then its distance is at least $|y+ \tfrac{y'-y}{x'-x}(a-x) - b|$.
\end{proof}

We can now give the following algorithmic version of Lemma~\ref{le:adjacency}.

\begin{lemma}
\label{le:adjacency2}
	Let $T$ be an embedded tree and let $P$ be an admissible extension of $T$.
	We can find in polynomial time a family of rays described with integer coordinates
	whose intersection graph is isomorphic to the complement of $T+P$.
\end{lemma}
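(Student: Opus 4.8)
The plan is to reuse the real-coordinate representation produced by Lemma~\ref{le:adjacency} unchanged, and then replace its coordinates by integers of polynomial bit-length without disturbing which pairs of rays meet. The starting point is that the geometric relations this representation relies on come in exactly two flavours. The \emph{rigid} ones are imposed by construction: each ray $\gamma_u$ of $T$ has a prescribed origin $a_u$ and is required to pass through the origin $a_v$ of its parent (or through $p_0=(1,0)$ when $u$ is the root); in Case~1 of Lemma~\ref{le:adjacency} a vertex is the line through two prescribed origins; in Case~2 a vertex is a ray through a prescribed point parallel to an already-defined ray. Every other relevant relation is \emph{strict}: two rays either cross transversally in the relative interior of both, or are disjoint with positive clearance. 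The preceding lemma guarantees that, in the real construction, every strict relation holds with clearance at least $n^{-O(n)}$ --- origins and the auxiliary points $p_i,t_i,r_i$ are pairwise $n^{-O(n)}$ apart and $O(n)$ close, and a line through two origins misses any other point by $n^{-O(n)}$ unless it passes through that point by design --- where $n=|V(T+P)|$.

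First I would run the constructions of Lemmas~\ref{le:snooker} and~\ref{le:adjacency} approximately, computing all origins $a_v$ (together with the points $b_w$ of Case~2) to $b=\Theta(n\log n)$ bits of precision. This is feasible in polynomial time: the points $p_i$ need $\cos$ and $\sin$ evaluated to $b$ bits, at cost $\mathrm{poly}(b)$; the rectangles $R_i$, the arcs $\alpha_i$, and each step of the iterative snooker construction use only field operations and square roots (intersecting a line with a circle, placing $d{+}1$ points equally on a sub-arc), each computable to $b$ bits in $\mathrm{poly}(b)$ time; and since $T$ has depth $O(n)$ there are only $O(n)$ levels, so the accumulated error stays below $2^{O(n\log n)}\cdot 2^{-b}$, which is far below the $n^{-O(n)}$ margin once the constant in $b$ is chosen large enough. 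I would then \emph{snap} everything to rationals: round each computed origin (and each $b_w$) to the nearest rational with a fixed common denominator $D=n^{O(n)}$ large enough that the rounding error stays well below the margin, and --- this is the crucial point --- \emph{rederive} the rays rather than rounding them directly. Concretely $\gamma_u$ becomes the ray from the rounded $a_u$ through the rounded $a_v$ (or through $(1,0)$ for the root), a Case~1 vertex becomes a long ray supported on the line through its two rounded origins (with origin pushed far away), and in Case~2 the vertex $w$ becomes the ray from the rounded $b_w$ with the \emph{same} rational direction vector as $\gamma_u$, while $w'$ becomes a long ray supported on the line through the rounded $b_w$ and the rounded $a_v$. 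Multiplying all coordinates by $D$ makes every origin and direction vector integral with $\mathrm{poly}(n)$ bits, and each ray is output as a pair (integer origin, integer direction).

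Finally I would check that the integer rays realise exactly the complement of $T+P$. Whether two rays meet is decided by the signs of a few low-degree polynomials in their coordinates (whether the supporting lines are parallel, and, if not, whether the crossing point lies strictly beyond each origin along its ray). At the designed non-intersections --- a child's ray passing through its parent's origin, and the Case~2 parallel pairs --- the relevant polynomial vanishes identically, and since the rays were rederived through the rounded points (or given literally equal direction vectors) it still vanishes; because our rays are open, these pairs stay disjoint, and all children of a vertex still meet pairwise at the rounded origin of their parent, which is interior to each of their rays. (The one auxiliary fact this uses, that the origins of a vertex, its parent, and its grandparent are not collinear, is itself a strict relation with an $n^{-O(n)}$ margin and so survives the rounding.) At every other pair the relevant polynomial has absolute value at least $n^{-O(n)}$ in the real construction by the preceding lemma, so a perturbation of size $\ll n^{-O(n)}$ cannot flip its sign and the intersection status is preserved. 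Hence the intersection graph of the integer rays is isomorphic to the complement of $T+P$, and the whole procedure --- approximate evaluation, rounding, and scaling --- runs in time polynomial in $n$. I expect the main obstacle to be precisely this bookkeeping: separating the rigid constraints (which must be reproduced symbolically through the rounded points, not by rounding rays) from the strict inequalities, and confirming that the $n^{-O(n)}$ margins of the preceding lemma truly dominate both the accumulated evaluation error and the final rounding error. Once such a margin is in hand, the perturbation argument itself is routine.
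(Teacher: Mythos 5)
Your overall strategy is the same as the paper's: run the real construction to bounded precision, round the \emph{points} (not the rays), rederive each ray through the rounded points so that the designed incidences and parallelisms are preserved exactly, and invoke the $n^{-O(n)}$ separation bounds of the preceding lemma to argue that every strict intersection/non-intersection survives the perturbation. Your explicit split into rigid versus strict relations is a clean way to organize exactly what the paper does implicitly when it ``snaps every constructed point to the nearest integer while following the construction.''

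There is, however, a quantitative error in your error-propagation step, and it is precisely the step the paper has to be careful about. You claim that over the $O(n)$ levels the accumulated error is $2^{O(n\log n)}\cdot 2^{-b}$, i.e.\ that the total amplification factor is $n^{O(n)}$, and on that basis you take $b=\Theta(n\log n)$ bits. But a \emph{single} level can already amplify an absolute error by a factor of $n^{\Theta(n)}$: the new origins on $\alpha_{i+1}$ are obtained by intersecting $\alpha_{i+1}$ with a line drawn through two previously constructed origins on $\alpha_i$, and those two origins can be as close as $\delta_i=n^{-\Theta(i)}$ (the separations shrink by a factor $\Theta(n^{-2})$ per level, reaching $n^{-\Theta(n)}$ at the deepest levels). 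A perturbation of size $\eta$ of each endpoint therefore tilts the line by an angle up to $\eta/\delta_i$, and since the supporting circle of $\alpha_{i+1}$ has radius $\Theta(n)$, the intersection point moves by up to $O(n)\,\eta/\delta_i=\eta\, n^{O(n)}$. Compounding this over $O(n)$ levels gives a total amplification of $n^{O(n^2)}$, not $n^{O(n)}$; with $b=\Theta(n\log n)$ the propagated error would dwarf the $n^{-O(n)}$ margins and the sign-preservation argument in your last paragraph would fail. The fix is only to increase the working precision to $\Theta(n^2\log n)$ bits (the paper uses $O(n^3)$), which keeps all coordinates of polynomial bit-length, so the lemma still follows --- but as written your choice of $b$ does not suffice, and no choice of the constant in $\Theta(n\log n)$ repairs it.
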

\begin{proof}
        Recall that the construction in Lemma~\ref{le:adjacency} consists of first constructing 
        a snooker representation of $T$, as described in Lemma~\ref{le:snooker}, then adding the
        rays corresponding to the paths in the extension.
        In the first part, each new point is created by: (a) computing the two intersections
        between two lines $\ell^-_v$ and $\ell^+_v$ through two existing points and a curve
        $\alpha_{i+1}$, then by (b) equally spacing $O(n)$ points between those two
        points on $\alpha_{i+1}$ (see Figure~\ref{fi:snooker}).
        In the second part, the only new points of the construction are the points $b_w$ 
        added on $\alpha_{i+1}$, between $a_u$ and $a_v$ (see Figure~\ref{fi:scenario2}). 
        We refer to this latter case as (c).

        Consider that each point is moved by a distance $< \varepsilon$
        after it is constructed. This may cause any point further
        constructed from those to move as well. In cases (b) and (c), the new
        points would be moved by no more than $\varepsilon$ as well.  

        In case (a) however, the error could be amplified. Let $a$ and
        $b$ be two previously constructed points, and suppose they
        are moved to $a'$ and $b'$, within a radius of
        $\varepsilon$. By the previous lemma, the distance between $a$
        and $b$ is at least $n^{-O(n)}$ and so the angle between the line $ab$
        and $a'b'$ is at most $\varepsilon n^{O(n)}$. Because the
        radius of the supporting circle of $\alpha_i$ is $\Theta(n)$,
        the distance between $ab \cap \alpha_i$
        and $a'b' \cap \alpha_i$ is at most $O(n)\varepsilon n^{O(n)}
        = \varepsilon n^{O(n)}$.
        
        Therefore, an error in one construction step expands by a
        factor $n^{O(n)}$. Now observe that each point of type
        (a) is constructed on the next level, and a point
        of type (b) is always constructed from points of type
        (a). Therefore, as there are $O(n)$ levels, an error is
        propagated at most $O(n)$ times, and the total
        error propagated from moving each constructed point by a
        distance $< \varepsilon$ is at most $\varepsilon n^{O(n^2)}$.

	By the previous lemma, there is a constant $c$ such that any origin
	in the construction of Lemma ~\ref{le:adjacency} is separated
        at least by a distance $A=1/n^{cn}$ from any other origin and
        any ray not incident to it. Therefore, by choosing
        $\varepsilon = n^{-c'n^3}$ for $c'$ large enough, the total
        propagation will never exceed $A$, and therefore perturbing
        the basic construction points of the reference frame and each
        further constructed point by a distance $<\varepsilon$ will
        not change the intersection graph of the modified rays.

        Therefore, to construct the required set of rays in polynomial
        time, we multiply every coordinate by the smallest power of
        $2$ larger than $1/\varepsilon$ and snap every constructed
        point to the nearest integer while following the construction
        of Lemma~\ref{le:adjacency}. Each coordinate can then be
        represented by $O(n^3)$ bits.
\end{proof}

Let $\alpha(G)$ be the size of the largest independent set in a graph $G$.
The following simple lemma can be deduced from the observation that subdividing
an edge twice increases the independence number by exactly one.

\begin{lemma}\label{lem:even-sub}
	If $G'$ is an even subdivision of $G$ where each edge $e\in E(G)$ is subdivided $2k_e$ times,
	then $\alpha(G')=\alpha(G)+\sum_e k_e$
\end{lemma}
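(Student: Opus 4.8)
The plan is to reduce the statement to a single atomic operation and then induct. Call the graph obtained from a graph $H$ by replacing an edge $ab\in E(H)$ with a path $a-x-y-b$ on two fresh vertices $x,y$ a \emph{double subdivision} of $H$ along $ab$. The key claim I would establish is that for every graph $H$ and every edge $ab\in E(H)$, the resulting graph $H''$ satisfies $\alpha(H'')=\alpha(H)+1$. Granting this, the lemma follows immediately: starting from $G$ one can reach $G'$ by a sequence of double subdivisions — $k_e$ of them nested inside each original edge $e$ — for a total of $\sum_e k_e$ steps, each of which raises the independence number by exactly $1$. It is immaterial that later double subdivisions act on edges that are themselves pieces of an already-subdivided original edge, since the claim is stated for an arbitrary host graph.

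To prove the lower bound $\alpha(H'')\ge\alpha(H)+1$, I would start from a maximum independent set $S$ of $H$. Since $ab\in E(H)$, at least one of $a,b$ lies outside $S$; if $a\notin S$ then $S\cup\{x\}$ is independent in $H''$ (the neighbours of $x$ in $H''$ are exactly $a$ and $y$), and symmetrically $S\cup\{y\}$ is independent if $b\notin S$. Either way $\alpha(H'')\ge|S|+1=\alpha(H)+1$.

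For the upper bound $\alpha(H'')\le\alpha(H)+1$, I would take a maximum independent set $T$ of $H''$ and set $T'=T\setminus\{x,y\}$. Since $x$ and $y$ are adjacent in $H''$ we have $|T\cap\{x,y\}|\le 1$, hence $|T|\le|T'|+1$. Now $T'\subseteq V(H)$, and the only edge of $H$ that $T'$ could violate is $ab$. If $T'$ does not contain both $a$ and $b$, then $T'$ is independent in $H$, so $|T'|\le\alpha(H)$ and $|T|\le\alpha(H)+1$. If $T'$ contains both $a$ and $b$, then $T'\setminus\{a\}$ is independent in $H$, giving $|T'|-1\le\alpha(H)$ and again $|T|\le\alpha(H)+1$. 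Combining the bounds yields $\alpha(H'')=\alpha(H)+1$, which proves the claim and hence the lemma.

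The whole argument is elementary and I expect no real obstacle; the only place that needs a moment's care is the subcase of the upper bound where both endpoints $a,b$ of the subdivided edge lie in $T'$ — which is genuinely possible because in $H''$ the vertices $a$ and $b$ are no longer adjacent — and which is disposed of by deleting one of them.
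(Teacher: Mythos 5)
Your overall strategy --- reducing to the atomic fact that a single double subdivision raises the independence number by exactly one, and iterating $\sum_e k_e$ times --- is precisely the observation the paper invokes (the paper states this fact without proof), and both your lower bound and the inductive reduction are correct. However, the upper bound has a genuine gap in the subcase where $T'$ contains both $a$ and $b$. There you derive $|T'|-1\le\alpha(H)$, i.e.\ $|T'|\le\alpha(H)+1$, and combine this with $|T|\le|T'|+1$; but these two inequalities together only give $|T|\le\alpha(H)+2$, not the claimed $|T|\le\alpha(H)+1$. As written, the chain of inequalities does not close.

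The missing step is small: if both $a$ and $b$ lie in $T$, then since $ax$ and $by$ are edges of $H''$ and $T$ is independent in $H''$, neither $x$ nor $y$ can belong to $T$; hence $T=T'$ in this case, and $|T|=|T'|\le\alpha(H)+1$ follows directly from $|T'|\le\alpha(H)+1$. With that one observation added, the case analysis is complete and the lemma follows exactly as you describe.
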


By combining Lemmas~\ref{le:findT+P}, \ref{le:adjacency2}, and \ref{lem:even-sub}, we obtain:

\begin{theorem}
	Finding a maximum clique in a ray intersection graph is NP-hard, even when the input
	is given by a geometric representation as a set of rays.
\end{theorem}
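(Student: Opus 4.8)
The plan is to exhibit a polynomial-time Karp reduction from \textsc{Maximum Independent Set} on planar graphs --- NP-hard by~\cite{GJ77} --- to the problem of computing a maximum clique in a ray intersection graph that is supplied together with an explicit list of rays. The correctness of the reduction rests entirely on the additive relation of Lemma~\ref{lem:even-sub}.

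First I would take an instance planar graph $G$ on $n$ vertices, fix an arbitrary combinatorial embedding, and invoke Lemma~\ref{le:findT+P} to compute in polynomial time an even subdivision $H = T+P$, together with the decomposition of $H$ into an embedded tree $T$ (with a linear order $\tau$ inherited from a root edge on the outer face) and an admissible extension $P$. Along the way I would record, for each edge $e \in E(G)$, the number $2k_e$ of times it gets subdivided; these multiplicities are produced explicitly by the construction and are $O(n)$ each.

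Next I would feed $T$ and $P$ to Lemma~\ref{le:adjacency2}, obtaining in polynomial time a family $\mathcal{R}$ of rays with integer coordinates of $O(n^3)$ bits whose intersection graph is isomorphic to the complement of $H$. A set of rays is pairwise intersecting precisely when the corresponding vertices form an independent set in $H$, so the maximum clique size of the intersection graph of $\mathcal{R}$ equals $\alpha(H)$. Finally, Lemma~\ref{lem:even-sub} gives $\alpha(H) = \alpha(G) + \sum_{e \in E(G)} k_e$, and since the correction term $\sum_{e} k_e$ is already known from the construction, any black-box solver for maximum clique on $\mathcal{R}$ yields $\alpha(G)$. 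Every step runs in polynomial time and the output size is polynomial, which establishes NP-hardness.

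The only substantive work is already packaged inside the invoked lemmas: that every planar graph admits such a decomposition $T+P$ with all new maximal paths joining \emph{consecutive} leaves at a common level (Lemma~\ref{le:findT+P}), and that the real-coordinate construction of Lemmas~\ref{le:snooker} and~\ref{le:adjacency} survives snapping to a polynomially fine integer grid without changing any intersection (Lemma~\ref{le:adjacency2}). Granting those, the present theorem is pure bookkeeping; the main point to be careful about is that the reduction hands the clique solver the geometric representation $\mathcal{R}$ itself, not merely the abstract graph, so that NP-hardness holds in the stated geometric-input model.
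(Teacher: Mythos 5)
Your proposal is correct and follows exactly the route the paper takes: the paper states the theorem as an immediate combination of Lemmas~\ref{le:findT+P}, \ref{le:adjacency2}, and \ref{lem:even-sub}, which is precisely the reduction from planar maximum independent set that you spell out. The bookkeeping with the recorded subdivision counts $k_e$ and the observation that cliques among the rays correspond to independent sets of $T+P$ are exactly the intended argument.
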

 
\bibliographystyle{plain}
\bibliography{paper}

\begin{thebibliography}{10}

\bibitem{AM04}
Pankaj~K. Agarwal and Nabil~H. Mustafa.
\newblock Independent set of intersection graphs of convex objects in {2D}.
\newblock {\em Computational Geometry}, 34(2):83--95, 2006.

\bibitem{AW05}
Christoph Amb{\"u}hl and Uli Wagner.
\newblock The clique problem in intersection graphs of ellipses and triangles.
\newblock {\em Theory Comput. Syst.}, 38(3):279--292, 2005.

\bibitem{BRSSTW06}
Jorgen Bang-Jensen, Bruce Reed, Mathias Schacht, Robert S\'amal, Bjarne Toft,
  and Uli Wagner.
\newblock On six problems posed by {J}arik {N}e\v{s}et\v{r}il.
\newblock In Martin Klazar, Jan Kratochv\'{\i}l, Martin Loebl, Ji\v{r}\'{\i}
  Matousek, and Robin Thomas, editors, {\em Topics in Discrete Mathematics},
  volume~26 of {\em Algorithms and Combinatorics}, pages 613--627.
  Springer-Verlag, 2006.

\bibitem{CG09}
J\'er\'emie Chalopin and Daniel Gon\c{c}alves.
\newblock Every planar graph is the intersection graph of segments in the
  plane.
\newblock In {\em STOC}, pages 631--638, 2009.

\bibitem{EH00}
Lars Engebretsen and Jonas Holmerin.
\newblock Clique is hard to approximate within $n^{1-o(1)}$.
\newblock In {\em Proceedings of the 27th International Colloquium on Automata,
  Languages and Programming (ICALP'00)}, pages 2--12, 2000.

\bibitem{FP11}
Jacob Fox and J{\'a}nos Pach.
\newblock Computing the independence number of intersection graphs.
\newblock In {\em SODA}, pages 1161--1165, 2011.

\bibitem{FKV10}
Mathew~C. Francis, Jan Kratochv\'il, and Tom\'a\v{s} Vysko\v{c}il.
\newblock Segment representation of a subclass of co-planar graphs.
\newblock ArXiv e-print:1011.1332, November 2010.

\bibitem{GJ77}
M.~R. Garey and David~S. Johnson.
\newblock The rectilinear steiner tree problem is {NP}-complete.
\newblock {\em SIAM J. Appl. Math.}, 32:826--834, 1977.

\bibitem{KK98}
Jan Kratochv\'{\i}l and Ale\v{s} Kub\v{e}na.
\newblock On intersection representations of co-planar graphs.
\newblock {\em Discrete Math.}, 178(1-3):251--255, 1998.

\bibitem{KM94}
Jan Kratochv\'{\i}l and Jir\'{\i} Matou\v{s}ek.
\newblock Intersection graphs of segments.
\newblock {\em J. Comb. Theory, Ser. B}, 62(2):289--315, 1994.

\bibitem{KN90}
Jan Kratochv\'{\i}l and Jaroslav Ne\v{s}et\v{r}il.
\newblock Independent set and clique problems in intersection-defined classes
  of graphs.
\newblock {\em Commentationes Mathematicae Universitatis Carolinae},
  31(1):85--93, 1990.

\bibitem{MP92}
Matthias Middendorf and Frank Pfeiffer.
\newblock The max clique problem in classes of string-graphs.
\newblock {\em Discrete Math.}, 108(1-3):365--372, 1992.

\bibitem{H96}
Johan~H\aa stad.
\newblock Clique is hard to approximate within $n^{1-\epsilon}$.
\newblock In {\em Proceedings of the 37th Annual Symposium on Foundations of
  Computer Science (FOCS'96)}, page 627, 1996.

\end{thebibliography}
\end{document}